\crefname{Distribution}{Distribution}{Distributions}
\newenvironment{proofof}[1]{\begin{proof}[Proof of #1]}{\end{proof}}
\numberwithin{figure}{section}
\newtheorem{theorem}{Theorem}[section]
\newtheorem{definition}[theorem]{Definition}
\newtheorem{remark}[theorem]{Remark}
\newtheorem{lemma}[theorem]{Lemma}
\newtheorem{claim}[theorem]{Claim}
\newtheorem*{theorem*}{Theorem}
\newtheorem*{lemma*}{Lemma}
\renewcommand{\left}{\mleft}
\renewcommand{\right}{\mright}
\newcommand{\ceil}[1]{\left\lceil #1 \right\rceil}
\newcommand{\bigceil}[1]{\bigl\lceil #1 \bigr\rceil}
\newcommand{\floor}[1]{\left\lfloor #1 \right\rfloor}
\newcommand{\midfloor}[1]{\lfloor #1 \rfloor}
\newcommand{\bigfloor}[1]{\bigl\lfloor #1 \bigr\rfloor}
\newcommand{\bk}[1]{\left( #1 \right)}
\newcommand{\midbk}[1]{( #1 )}
\newcommand{\bigbk}[1]{\bigl( #1 \bigr)}
\newcommand{\Bigbk}[1]{\Bigl( #1 \Bigr)}
\newcommand{\Bk}[1]{\left[ #1 \right]}
\newcommand{\midBk}[1]{[ #1 ]}
\newcommand{\bigBk}[1]{\bigl[ #1 \bigr]}
\newcommand{\BK}[1]{\left\{ #1 \right\}}
\newcommand{\midBK}[1]{\{ #1 \}}
\newcommand{\abs}[1]{\left\lvert #1 \right\rvert}
\newcommand{\midabs}[1]{\lvert #1 \rvert}
\DeclareMathOperator*{\E}{\mathbb{E}}
\DeclareMathOperator{\poly}{poly}
\renewcommand{\tilde}{\widetilde}
\newcommand{\eqdef}{\eqqcolon}
\newcommand{\defeq}{\coloneqq}
\newcommand{\eps}{\varepsilon}
\renewcommand{\l}{\ell}
\renewcommand{\emptyset}{\varnothing}
\renewcommand{\epsilon}{\eps}
\renewcommand{\emptyset}{\varnothing}
\newcommand{\smallsub}{\scriptscriptstyle}
\newcommand{\wordlen}{w}
\newcommand{\patrascu}{\textup{P{\v{a}}tra{\c{s}}cu}}
\newcommand{\Lmax}{L_{\max}}
\newcommand{\incword}{\tilde{m}}
\newcommand{\Nchunk}{N_{\textup{chk}}}
\newcommand{\I}{\mathcal{I}}
\newcommand{\cost}{\textup{\textsf{cost}}}
\newcommand{\dist}{\textup{dist}}
\newcommand{\treenum}[1][]{\mathcal{N}\bk{#1}}
\newcommand{\labfunc}{\mathcal{A}}
\renewcommand{\phi}{\varphi}
\newcommand{\Mcat}{M_{\textup{cat}}}
\newcommand{\mcat}{m_{\textup{cat}}}
\newcommand{\Mmax}{M_{\textup{max}}}
\newcommand{\kcat}{k_{\textup{cat}}}
\newcommand{\mmleft}{m_{\textup{fix}}}
\newcommand{\mmright}{m_{\textup{rem}}}
\newcommand{\Mleft}{M_{\textup{fix}}}
\newcommand{\Mright}{M_{\textup{rem}}}
\let\Mfix\Mleft
\newcommand{\ym}{y_{\smallsub M}}
\newcommand{\yk}{y_{\smallsub K}}
\newcommand{\curn}{n_u}
\newcommand{\planM}{\tilde{M}}
\newcommand{\X}{\mathcal{X}}
\newcommand{\mychoose}[2]{{{#1}\atop#2}}
\newcommand{\Bigbinom}[2]{\Bigbk{\mychoose{#1}{#2}}}
\author{
  Tianxiao Li \thanks{Institute for Interdisciplinary Information Sciences, Tsinghua University. \url{litx20@mails.tsinghua.edu.cn}.}
  \and
  Jingxun Liang \thanks{Institute for Interdisciplinary Information Sciences, Tsinghua University. \url{liangjx20@mails.tsinghua.edu.cn}.} 
  \and
  Huacheng Yu \thanks{Department of Computer Science, Princeton University. \url{yuhch123@gmail.com}.}
  \and
  Renfei Zhou \thanks{Institute for Interdisciplinary Information Sciences, Tsinghua University. \url{zhourf20@mails.tsinghua.edu.cn}.}
}
\title{Dynamic ``Succincter''}
\date{}
\begin{document}
\maketitle
\thispagestyle{empty}
\setcounter{page}{0}
\begin{abstract}
  Augmented B-trees (aB-trees) are a broad class of data structures.
  The seminal work ``succincter'' by P\v{a}tra\c{s}cu~\cite{patrascu2008succincter} showed that any aB-tree can be stored using only two bits of redundancy, while supporting queries to the tree in time proportional to its depth.
  It has been a versatile building block for constructing succinct data structures, including rank/select data structures, dictionaries, locally decodable arithmetic coding, storing balanced parenthesis, etc.

  In this paper, we show how to ``dynamize'' an aB-tree.
  Our main result is the design of dynamic aB-trees (daB-trees) with branching factor two using only three bits of redundancy (with the help of lookup tables that are of negligible size in applications), while supporting updates and queries in time polynomial in its depth.
  As an application, we present a \emph{dynamic} rank/select data structure for $n$-bit arrays, also known as a dynamic \emph{fully indexable dictionary} (FID)~\cite{RRR02}.
  It supports updates and queries in $O(\log n/\log\log n)$ time, and when the array has $m$ ones, the data structure occupies
  \[
    \log\binom{n}{m}+O(n/2^{\log^{0.199} n})
  \]
  bits.
  Note that the update and query times are optimal even without space constraints due to a lower bound by Fredman and Saks~\cite{FS89}.
  Prior to our work, no dynamic FID with near-optimal update and query times and redundancy $o(n/\log n)$ was known.
  We further show that a dynamic sequence supporting insertions, deletions and rank/select queries can be maintained in (optimal) $O(\log n/\log\log n)$ time and with $O(n\cdot \poly\log\log n/\log^2 n)$ bits of redundancy.
\end{abstract}

\newpage

\section{Introduction}

Succinct data structures~\cite{jacobson1988succinct} are data structures that use space very close to the information-theoretical optimum.
To store data of size $n$ bits, a succinct data structure uses $n+o(n)$ bits of space, where the $o(n)$ term is referred to as the redundancy, while supporting operations efficiently.
Despite such strong requirements, efficient succinct data structures have been proposed for many fundamental problems, including dictionaries and filters~\cite{BM99,Pagh01,RRR02,RR03,patrascu2008succincter,ANS10,PSW13,Yu20,LYY20,Bercea2020ADS,BFKKL22,bender2023tiny}, rank and select data structures~\cite{jacobson1988succinct,Jacobson89,CM96,Munro96,MRR01,RRR02,GGGRR07,GRR08,patrascu2008succincter,Yu19}, storing trees and strings~\cite{hon2003succinct,chan2004compressed,makinen2008dynamic,chan2007compressed,gonzalez2009rank,he2010succinct,NS14,NN14,bender2023tiny}, initializable arrays~\cite{HK17,katoh2022place}, etc.

Many of these succinct data structures are static, i.e., the data is fixed and given in advance, then it is preprocessed into a data structure supporting fast queries.
The seminal paper ``succincter'' by \patrascu{}~\cite{patrascu2008succincter} proposed generic techniques for constructing static succinct data structures.
In particular, it was shown that any \emph{augmented B-tree} (aB-tree) can be compressed with only \emph{two} bits of redundancy, while supporting efficient queries.
Augmented B-trees are a class of generic tree data structures, which turn out to be applicable to several central problems in succinct data structures, including dictionaries, rank/select data structures, balanced parentheses matching, etc.
Using this compression of aB-trees, one is able to design succinct data structures with $O(n/\log^c n)$ bits of redundancy for these problems with constant query time for any constant $c$, while most prior techniques can only give a redundancy of around $O(n/\log n)$.\footnote{For certain problems including rank/select, this was a ``formal barrier'' to some extent, as there is a matching lower bound against any \emph{systematic encoding}~\cite{Golynski07}.}

On the other hand, it is a more challenging task to dynamically maintain data under updates within succinct space, while supporting efficient queries.
Despite the versatility of \patrascu{}'s succinct aB-trees, it was not known how to efficiently update a general succinct aB-tree.
In fact, as we will discuss in \cref{sec:tech}, there is a common difficulty to dynamize any succinct data structure with variable-length components.

In this paper, we propose a generic technique to store multiple variable-length data structures supporting efficient updates, and apply this technique to design new dynamic succinct data structures.

\subsection{Our results}
Our main result shows that a dynamic augmented B-tree (daB-tree) with branching factor two can be stored with three bits of redundancy, and can be updated and queried in time polynomial in the \emph{height} of the tree.
Dynamic augmented B-trees are formally defined in \cref{def:daB_tree} (see also \cref{sec:abtree}).
At a high level, a daB-tree is a tree data structure maintaining a dynamic array $A[1..n]$, such that every leaf corresponds to an entry $A[i]$, and every tree node is associated with a label.
The label of leaf $i$ is a function of $A[i]$, and the label of an internal node is a function of the labels of its children.
Thus, the entire tree is determined by the underlying array $A$, while the labels are designed such that they facilitate the queries on $A$.
In this paper, we will focus on maintaining binary daB-trees, although our technique also applies to general branching factor $B>2$ with a worse bound.

As a concrete example, for $A\in\{0,1\}^n$, if we want to design a data structure that can return $A[1]+\cdots+A[i]$ efficiently for any given $i$ (a.k.a.~the rank queries), we can set the label of an internal node to be the sum of its children, and set the label of a leaf to be the value of $A[i]$.
Thus, the label of any internal node is the sum of its leaves.
Given access to the labels, one can compute any partial sum in time $O(\log n)$.

We prove that \emph{any} such tree (defined by the functions computing the labels) can be stored with only three bits of redundancy \emph{conditioned on} the root label,\footnote{The space benchmark is to store one of the possible arrays with this particular root label. See also \cref{sec:abtree}.} such that both recovering the label of any internal node and updating a single entry $A[i]$ take time polynomial in the height of the tree.

\begin{theorem}[Informal version of \cref{clm:dyn_aBtree}]
	A daB-tree maintaining an array $A\in\Sigma^{n}$ with label set $\Phi$ can be stored with three bits of redundancy such that an update or a query takes $\poly\log n$ time, as long as the word-size $w\geq \Omega(\log (n\cdot\left|\Sigma\right|\cdot\left|\Phi\right|))$, assuming the access to a fixed lookup table of size $\poly(n, |\Sigma|, |\Phi|)$.
\end{theorem}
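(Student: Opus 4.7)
The plan is to prove the bound by induction on the height $h$ of the daB-tree, recursively combining dynamic encodings of the two child subtrees with the root label, in analogy to \patrascu{}'s static spillover construction but replacing each fixed-width encoding with a dynamic variant. Writing $N_\phi$ for the number of leaf-arrays consistent with a node label $\phi$, the inductive claim is that the subtree rooted at any node can be stored in $\log N_\phi + 3$ bits conditional on $\phi$, while supporting updates and label queries in $\poly(h) = \poly\log n$ time. At an internal node with children $v_L, v_R$, the labelling rule restricts child labels to a set $\Phi_\phi \subseteq \Phi \times \Phi$ with $N_\phi = \sum_{(\phi_L,\phi_R) \in \Phi_\phi} N_{\phi_L} N_{\phi_R}$, and I would stitch the two child encodings together with a mixed-radix spillover identifying the pair $(\phi_L,\phi_R)$, exactly as in succincter, but housed inside a generic dynamic container for a pair of resizable variable-length strings that the paper promises to construct.

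With this container in hand, both operations are natural: a query chases the root-to-target path, decoding the spillover and the child label at each level with $O(1)$ table lookups; an update recomputes labels bottom-up along the $O(h)$-long path from the modified leaf and then re-encodes the spillovers top-down, where each level may have to shift an $O(1)$-bit length change through the container. All lookup tables remain of size $\poly(n, |\Sigma|, |\Phi|)$ because every spillover word involves only labels in $\Phi$ and child-representation lengths in a bounded range, and they act in constant time on a machine word of size $\Omega(\log(n|\Sigma||\Phi|))$, so the per-level cost multiplied by the $O(h)$ levels gives the claimed $\poly\log n$ bound.

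The main obstacle, and the reason for the extra bit of redundancy beyond \patrascu{}'s static $+2$, is the dynamic variable-length container itself. In the static regime the length of each child representation is fixed once at build time, but under updates it fluctuates by $O(1)$ bits whenever the child label shifts within $\Phi$, producing cascading positional changes inside the container. The delicate step is to maintain a bounded slack -- at most one bit -- inside each container so that a typical length change only reshuffles an $O(1)$-sized neighborhood of bits, with occasional global re-packings amortized across many updates. The slack invariant then has to compose recursively without inflating the constant, so that the redundancy telescopes up the tree to exactly $+3$ at the root. Verifying this compositional slack accounting, rather than any individual piece of bit-level bookkeeping, is where I expect the heaviest analytic work; the rest of the argument is a straightforward induction once the container lemma is in place.
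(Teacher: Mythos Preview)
Your high-level inductive skeleton matches the paper --- spillover encoding per subtree, combine two children at each internal node, walk the root-to-leaf path for queries and updates --- but your account of the ``dynamic container'' is where the proposal breaks down, and that container is precisely the paper's main technical contribution rather than something you can assume.

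Concretely, three of your claims about the container are wrong. First, the paper's adapter carries \emph{zero} slack: it stores the $\ell_1+\ell_2$ words of the two children in exactly $\ell_1+\ell_2$ words of the parent via an explicit bijection $\sigma_{\ell_1,\ell_2}$, constructed from a bit-reversal hash and a parenthesis-matching argument inspired by consistent hashing. There is no per-node slack bit, no amortization, and no ``occasional global re-packing.'' Second, a single allocation or release relocates $\Theta(\log L)$ words in the worst case, not an $O(1)$-sized neighborhood; the paper even proves a nearly matching $\Omega(\log L/\log\log L)$ lower bound, so your $O(1)$-reshuffle picture cannot be made to work. Third, the extra bit taking the redundancy from $+2$ to $+3$ has nothing to do with container slack. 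The $+2$ bound (Theorem~3.2) already holds for the adapter-based construction under a mild assumption that $\log\mathcal{N}(n_u,\phi_u)$ changes by $O(w)$ per update; the third bit (Theorem~4.1) pays for a one-bit flag at the root indicating whether a separate ``relaxed'' encoding scheme is in use, introduced solely to handle the case where that assumption fails.

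So the gap is that you have deferred the hard part --- designing a zero-redundancy, worst-case $O(\log L)$-relocation merge of two variable-length memories --- to an assumed lemma, and your speculation about what that lemma looks like (slack bits, amortized repacking, recursive slack composition) is not how the paper does it and would not obviously telescope to a constant. The actual mechanism is the explicit family of bijections in Section~2, and the $+3$ accounting comes from Pătrașcu-style spillover loss ($<1$ bit total), rounding the root spill ($+1$), and the relaxed-scheme indicator ($+1$).
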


We remark here that in most applications, we will divide the input data into blocks of poly-log sizes, and maintain a daB-tree for each block. Thus, the update and query time of a daB-tree is in fact poly-log-log in the data size.\footnote{It is also worth noting that our design of daB-tree is \emph{strongly history-independent}, i.e., the encoding of a daB-tree only depends on the array it stores, but not the operations in the past. However, none of the applications listed below in this paper are history-independent, because we use history-dependent subroutines to manage memory chunks (\cref{lm:chunking}).}

Using daB-trees, we obtain improved dynamic succinct data structures for several problems.

\paragraph{Rank/select.}
In \textsc{Rank/Select} problem with modifications, we need to maintain an array $A[1..n]$ with $m$ ones and $n-m$ zeros. Succinctly storing this array requires $\log\binom{n}{m}$ bits of space. In each operation performing on $A$, we may modify $A[k]$, query the number of ones in $A[1..k]$ (\textsc{Rank} queries), or query the index of the $k$-th one (\textsc{Select} queries) for any $k$. 
Such a data structure is also called a (dynamic) fully indexable dictionary~\cite{RRR02}.

By using daB-trees to maintain $A$, we can solve \textsc{Rank/Select} problem with $O(n/2^{(\log n)^{1/5-o(1)}})$ bits of redundancy and $O(\log n/\log \log n)$ query/update time. Even with no space constraint, our time already matches the lower bound~\cite{FS89,patrascu2014dynamic}.

\paragraph{Locally-decodable arithmetic coding with small alphabet size.}
The arithmetic coding problem is defined on an array $A[1..n]$ with alphabet $\Sigma$ satisfying $|\Sigma|=O(1)$. For $\sigma\in\Sigma$, denote the number of occurrences of $\sigma$ in $A$ by $f_\sigma$, then storing $A$ requires at least $\log \binom{n}{f_1 f_2 \cdots f_{|\Sigma|}}$ bits of space (the problem setting assumes $\{f_\sigma\}$ are stored outside and are not counted in the space usage). We need to support modifications and queries on any single $A[i]$. Via our method, we can achieve redundancy of $O(n/\poly\log n)$ bits for arbitrary $\poly\log n$ on the denominator, with $O(\log^2 \log n)$ time for each query, and $O(\log^5 \log n)$ time for each modification.

\paragraph{Dynamic sequences.}
We further show that we can maintain a dynamic sequence over an alphabet $\left|\Sigma\right|=O(1)$ under insertions and deletions of the symbols, supporting \textsc{Rank/Select} queries (for the case $|\Sigma|>2$, \textsc{Rank}$(k,\sigma)$ is defined as the number of occurrences of $\sigma$ in the first $k$ entries, and \textsc{Select}$(k,\sigma)$ is the location of the $k$-th $\sigma$).
In our result, we present a data structure using $O(n \log^9\log n/\log^2 n)$ bits of redundancy with $O(\log n/\log \log n)$ time for any operation. This improves the previous best-known redundancy of $O(n/\log ^{1-\epsilon} n)$ with optimal time~\cite{NN14}.

\paragraph{Improved range min-max trees.}
Dynamic sequences have applications to more problems such as maintaining dynamic trees introduced in \cite{NS14}. In this problem, we use a parenthesis sequence $P[1,\dots,2n]$ to represent the dynamic rooted tree with $n$ nodes, and support various operations such as finding the matching parenthesis of $P[i]$, inserting/deleting a pair of parenthesis, performing \textsc{Rank/Select} on opening or closing parenthesis, and so on.

For all these operations, \cite{NS14} uses a data structure called range min-max tree to solve this problem with $O(\log n/\log \log n)$ time for most operations. By using daB-trees, we get the same time complexity, but with a better redundancy of $O(n \log^9\log n/\log^2 n)$ bits.

\subsection{Technical Contribution}\label{sec:tech}

In the design of many static succinct data structures, the main data structure is a concatenation of multiple components.
The preprocessing algorithm first generates each component based on the data, then concatenates them with possibly another (small) structure that navigates the query algorithm to each component.
In order for the whole data structure to be succinct, each component needs to be preprocessed succinctly respectively.
Often this leads to the variable lengths of the components, i.e., their lengths may depend on the input.
In particular, the aB-tree recursively combines $B$ smaller potentially variable-length data structures that also have this structure.
Updates can be very challenging for such data structures.
As we update the data, the length of a component may change.
Hence, if we naively concatenate the components, when one component increases its length, we will have to shift all subsequent ones to make room for it, which is unaffordable.

One way to dynamize such a data structure is to apply the ideas of~\cite{BCDMS99}, which works well when each component is not too small.
Roughly speaking, if each component has size $\approx L$, then we divide memory into blocks of size $\sqrt{L}$ such that each component occupies an integer number of blocks.
For every component, we maintain an array of pointers pointing to the list of memory blocks it is currently using.
If a component increases its size and needs more space, we allocate a whole block for it; if a component shrinks its size, we release its unused blocks.
In this way, each component can still be accessed as usual.
The redundancy becomes $O\bigbk{\sqrt{L}\log n}$ bits per component, i.e., roughly $O\bigbk{(\log n)/\sqrt{L}}$-fraction of the memory is the redundancy.
Hence, when $L$ is at least (a large) $\poly\log n$, this approach gives small redundancy.

However, in many constructions (including the rank/select data structures, succinct aB-trees, dynamic sequences, etc), the basic components have smaller sizes.
Our main technical contribution is a generic way to jointly store two variable-length (and potentially small) data structures with \emph{no} redundancy, such that each of them can be accessed and updated (changing sizes) efficiently.
Consider two data structures of $\l_1$ words and $\l_2$ words respectively.
We would like to store them in $L=\l_1+\l_2$ consecutive words with \emph{no} auxiliary data.
Moreover, they need to be accessed and updated efficiently:
\begin{enumerate}[label=(\alph*)]
\item given $(i, j)$ for $i\in\{1,2\}$ and $j\in[\l_i]$, find efficiently where the $j$-th word in the $i$-th data structure is stored;
\item support the allocation or release of a word for one of the two components, i.e., increment or decrement $\l_i$ (and consequently, $L$).
\end{enumerate}
Naively concatenating two data structures gives fast access (item (a)), while allocation or release can take $\Theta(L)$ time.
Storing the first data structure in odd addresses and the second in the even addresses can give fast access, allocation and release, but may induce $\Omega(L)$ words of redundancy.
It may also be tempting to store the two data structures with their heads joined in the middle and the tails extending to the two directions respectively.
This can work when the two components can only allocate or release words at tail.
However, the combined data structure needs to allocate or release on both ends.
Hence, one cannot iteratively combine more than two components in this way.

We call such a ``meta-data-structure'' that combines two small data structures into a single large data structure an \emph{adapter}.
We design an adapter with no redundancy such that any word in a small data structure can be accessed in constant time, and each allocation or release only requires $O(\log L)$ words to be relocated.
More specifically, we construct \emph{bijections}, inspired by \emph{consistent hashing}~\cite{KLLPLL97},
\[
  \sigma_{\l_1,\l_2} : \left\{(i,j):i\in\{1,2\}, \, j\in[\l_i]\right\} \rightarrow [L]
\]
for $L=\l_1+\l_2$, such that only $O(\log L)$ elements are matched differently by $\sigma_{\l_1,\l_2}$ and $\sigma_{\l_1,\l_2+1}$ (or $\sigma_{\l_1+1,\l_2}$).\footnote{The technique from \cite{Berger2020MemorylessWA} could also be used to construct such bijections if randomness is allowed, where the number of differently matched elements is $O(\log L)$ \emph{in expectation}. In contrast, we provide a deterministic construction.}
Thus, by storing the $j$-th word of the $i$-th small data structure in the $\sigma_{\l_1,\l_2}(i,j)$-th word in the combined data structure, we only need to relocate $O(\log L)$ words when the sizes change (by one).
By precomputing a lookup table for each $\sigma_{\l_1,\l_2}$ and the differences between adjacent bijections, we can recover the new address of a word and find the words that need relocation efficiently.
Note that since we only need to combine small data structures here (otherwise, the first approach mentioned above already works), the lookup tables are also small to store.
Also note that assuming the two small components only allocate or release a word at tail (the end with a larger index), the combined data structure also only needs to allocate or release at tail.
Thus, we can iteratively apply adapters when there are multiple components.

\subsection{An Overview of Augmented B-trees}\label{sec:abtree}

In order to obtain small redundancy, we will combine adapters with the spillover representation~\cite{patrascu2008succincter}.
In this subsection, we give an overview of the spillover representation and the (static) succinct aB-trees.

\paragraph{Spillover representation.}
The spillover representation introduced by \patrascu{}~\cite{patrascu2008succincter} represents a data structure using a pair $(k, m)\in [K]\times\{0,1\}^M$ for integers $K$ and $M$.
Fixing $K$ and $M$, there are $K\cdot 2^M$ such pairs $(k, m)$.
Hence, it is used to represent a data structure of ``$M+\log K$'' bits.
This is often useful for reducing the redundancy incurred due to ``rounding to an integer number of bits'' when a data structure has multiple components.
For a data structure of size $O(n)$, one often chooses $K$ to be $\Theta(n^2)$.
In this case, by using the spillover representation, the redundancy of a component due to rounding (now on $K$) becomes small, as $\log K-\log (K-1)=\Theta(1/K)=\Theta(1/n^2)$.
They add up to only $o(1)$ bits in total.
Also note that for $K=\Theta(n^2)$, the spill $k$ fits in $O(1)$ words, and can be operated on efficiently.
We will apply adapters to a part of $m$ that occupies an integer number of words. 

\paragraph{Augmented B-trees.}
An augmented B-tree is a tree with branching factor $B$ on $n$ leaves, which correspond to the entries of an array $A[1..n]$.
Each node is associated with a label $\phi\in\Phi$. 
The label of leaf $i$ is a function of $A[i]$, and the label of an internal node is a function of the labels of its children.

\cite{patrascu2008succincter} encodes an aB-tree by recursively computing the encoding bottom up using the spillover representation.
For a node $u$ with label $\phi$ rooted at a subtree of size $n$, one constructs an encoding $(k, m)\in[K(n, \phi)]\times \{0,1\}^{M(n, \phi)}$ \emph{conditioned on} $\phi$, i.e., the label $\phi$ will eventually be stored outside this encoding, and will be given when accessing $(k, m)$. 
The range of encoding $[K(n, \phi)]\times \{0,1\}^{M(n, \phi)}$ may also vary based on the label.

Suppose $u$'s children have labels $\phi_1,\ldots,\phi_B$, and we have recursively computed their encoding $(k_i, m_i)\in [K(n/B, \phi_i)]\times \{0,1\}^{M(n/B, \phi_i)}$ for $i=1,\ldots,B$.
Then it was shown that the spills $\{k_i\}$ \emph{together with} the labels $\{\phi_i\}$ can be combined into a single spill $k$ and a few extra bits \emph{conditioned on} $\phi$ (here, one will use the property that $\phi$ is a function of $\phi_1,\ldots,\phi_B$).
By concatenating these extra bits with all $\{m_i\}$, one obtains an encoding $(k, m)$.

The most important feature of this construction is that it incurs \emph{almost no} redundancy.
That is, if for each $i\in \{1,\ldots,B\}$, the number of different encodings $K(n/B, \phi_i)\cdot 2^{M(n/B, \phi_i)}$ is approximately the number of different subarrays of length $n/B$ that will lead to a root label of $\phi_i$, then this also holds for the parent $u$: $K(n, \phi)\cdot 2^{M(n, \phi)}$ is approximately the number of different length-$n$ arrays with root label $\phi$.
Therefore, one is able to store the entire array $A[1..n]$ using nearly-optimal space {conditioned on} the root label.
This is useful, for example, when storing sparse binary arrays.
If we put the number of ones in the subarray in the labels, then the optimal space conditioned on the root label for storing a $n$-bit array with $m$ ones becomes $\log\binom{n}{m}$.
The aB-trees can match this benchmark, instead of $n$ bits, up to two bits of redundancy.
One can also show that this encoding allows us to recursively decode the labels of any root-to-leaf path in constant time per label.

\subsection{Organization}

In \cref{sec:adapter}, we first introduce the virtual memory model that facilitates the recursive construction of daB-trees using adapters, and then we present the construction of adapters.
In \cref{sec:daB}, we define daB-trees, and present a weaker result that assumes the updates cannot change the optimal space of any sub-daB-tree by more than $O(1)$ words.
In \cref{sec:remove_assumption}, we present our main result on daB-trees, removing this assumption.
Next, we present applications of daB-trees in \cref{sec:application}.
Finally, we prove a nearly matching lower bound for adapters in \cref{sec:lb}.

\section{Adapters}
\label{sec:adapter}

In this section, we will introduce the most important subroutine in our paper, addressing a fundamental challenge in designing dynamic succinct data structures: maintaining multiple variable-length data structures within contiguous memory space while supporting fast updates and queries. We start by introducing the storage model for variable-length data structures.

\subsection{Virtual Memory Model}
\label{sec:virtual_memory}

We define the \emph{virtual memory model} that will facilitate our recursive construction, which involves variable-length data structures.
Similar to the word RAM model, the memory consists of memory words, each of which stores a $w$-bit binary string. The words are labeled with positive integers $\{1, 2, \ldots\}$, called the \emph{addresses} of the words. One should think that the memory words form a tape that extends to infinity, starting from word $1$.

A variable-length data structure is allowed to use a prefix of memory words on the tape, i.e., words with addresses $[1, L]$ for some natural number $L$. Like in word RAM, we can read or write a memory word given its address $i$ (we may use ``accessing a word'' to refer to reading or writing a memory word).
Additionally, we also allow the data structure to \emph{allocate} or \emph{release} memory words, by increasing or decreasing $L$ by one. 
The memory size $L$ is assumed to be stored outside the data structure, and is always given to the algorithm when performing an operation.

The usable part of the tape is called a \emph{virtual memory} (VM for short), which serves as a variable-size memory to a variable-length data structure.  
As suggested by its name, the VM does not necessarily occupy a consecutive piece of physical memory in the final implementation on a word RAM. 
The addresses will need to be translated between the VM and the physical memory, which may introduce an additional time cost.
Therefore, we will use the following three quantities to measure the time performance of a variable-length data structure when processing an operation.
\begin{enumerate}
\item The number of arithmetic instructions and lookup table queries it performs. Note that lookup tables will always be stored in a static region of the physical memory, which can be accessed with no overhead. Every single instruction of these types takes $O(1)$ (absolute) time.
\item The number of accesses to VM words. The actual time spent will depend on the implementation of the VM.
\item The number of allocations and releases. These types of instructions will be more expensive than the previous ones, and the time cost is also dependent on the implementation of the VM.
\end{enumerate}

\paragraph{Variable-length data structures with spillover representation.} 
A VM can only store an integer number of words, which may cause $O(\wordlen)$ redundancy due to rounding. 
In our application later, it will be combined with the spillover representation, and thus, allowing us to maintain dynamic data structures with subconstant redundancy.

Recall that the spillover representation, introduced in \cref{sec:abtree}, encodes a data structure by a pair $(k, m) \in [K] \times \BK{0, 1}^M$, where $k$ is the spill and $m$ denotes the memory bits. We further define $\l \defeq \midfloor{M / \wordlen}$ and encode the data structure using $(k, m, \incword) \in [K] \times \bk{\BK{0, 1}^{\wordlen}}^{\l} \times \BK{0, 1}^{M - \wordlen \l}$. In this representation, the $M$ memory bits are divided into $\l$ complete $\wordlen$-bit words and an incomplete word of $M - \wordlen \l$ bits. 

As we will see later, such a dynamic data structure may need to request memory allocation or release during updates, i.e., changing $(K, M)$ and consequently $\l$.
The $\l$ complete words, which contain most of the information, are stored under the virtual memory model. Under this model, when $\l$ changes, the creation or deletion of complete words will always happen at the end of the list of complete words.

Similar to \cite{patrascu2008succincter}, the spill $k$ is stored outside, and so is the incomplete word $\incword$. 
The update or query algorithm will always recover them first before accessing the complete words $m$, and rewrite them back after updating $m$.

\subsection{Two-Way Adapters}
\label{sec:two_way_adapters}

Suppose we have two variable-length data structures $D_1$ and $D_2$ with spillover representations. 
Each $D_i$ is divided into $\l_i$ complete words (for $i=1,2$), an incomplete word, and a spill.
Our goal is to store both of them within a contiguous piece of memory.
As we will see in the next section, the two incomplete words and spills can be compressed into only $O(1)$ words, and it turns out that the previous technique in~\cite{patrascu2008succincter} can store them with little redundancy, and allowing constant-time accesses.
Thus, the focus of this section is to store the complete-word parts of the two data structures, each described by a VM of $\l_i$ words respectively, into a single large contiguous piece of memory, which we also model as a large VM of $L \defeq \l_1 + \l_2$ words. We call the two smaller VMs \emph{sub-\!VMs} and call the large, combined VM a \emph{super-\!VM}.

The \emph{two-way adapter} addresses this challenge by maintaining an address translation between the two sub-VMs and the super-VM.
Specifically, we use $(i, j)$ to represent the $j$-th word in the $i$-th sub-VM, while the words in the super-VM are labeled by $\{1, 2, \ldots, L\}$. 
A two-way adapter maintains a bijection $\sigma_{\l_1,\l_2}$ between $A:=\left\{(i,j):i\in\{1,2\},j\in[\l_i]\right\}$ and $B:=[L]$, supporting
\begin{itemize}
    \item \texttt{allocate}($i$): increment $\l_i$ and $L$ by one,
    \item \texttt{release}($i$): decrement $\l_i$ and $L$ by one.
\end{itemize}
The \emph{cost} of a two-way adapter is the number of elements in $A$ that change their images during an allocation or a release, including the last element that was just added or deleted.
Note that since we require the bijection to depend only on $\l_1,\l_2$, we \emph{do not need extra space} to store it.

In the following, we present an adapter with $O(\log L)$ cost.
\begin{lemma}\label{lem:two_way_adapter}
    There is a two-way adapter that has cost $O(\log L)$.
\end{lemma}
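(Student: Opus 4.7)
The plan is to construct the family of bijections $\{\sigma_{\ell_1, \ell_2}\}$ by a deterministic analogue of consistent hashing. First, I will attach to each sub-VM slot $(i, j)$ (for $i \in \{1,2\}$ and $j \in \mathbb{N}^+$) and to each super-VM position $k \in \mathbb{N}^+$ a deterministic ``priority'' in $[0, 1)$, drawn from low-discrepancy sequences (say, van der Corput / bit-reversal type), chosen so that all priorities are distinct. The bijection $\sigma_{\ell_1, \ell_2}$ is then defined by merging the $L$ active position priorities with the $\ell_1 + \ell_2 = L$ active slot priorities into a single sorted list and pairing the $r$-th slot with the $r$-th position.

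Next, I would analyze the cost of $\texttt{allocate}(1)$ (the $\texttt{allocate}(2)$ and $\texttt{release}$ cases being symmetric). The operation inserts exactly one new slot priority (for $(1, \ell_1+1)$) and one new position priority (for $L+1$) into the merged sorted list. If the two new entries happen to land at the same rank in the merge, no existing pair is disturbed, so the cost is just the new pair itself. Otherwise, the pairs whose ranks lie strictly between the two insertion ranks all shift by one in the pairing, and the cost is bounded by the absolute difference of the two insertion ranks plus one.

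The heart of the argument is thus to show that this rank difference is always $O(\log L)$, uniformly over $(\ell_1, \ell_2)$. This reduces to a discrepancy estimate: for every threshold $t \in [0,1)$, the number of active slot priorities in $[0, t)$ must track the number of active position priorities in $[0, t)$ up to $O(\log L)$. Using low-discrepancy constructions, each family on its own has discrepancy $O(\log L)$ against the uniform distribution; by aligning slot priorities of each sub-VM with an appropriate subinterval of $[0,1)$ scaled by the target ratio $\ell_i / L$, the pairwise discrepancy between slot priorities and position priorities can be made $O(\log L)$.

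The main obstacle I expect is the simultaneous coordination of the two sub-VM priority families against the position priority family \emph{across all ratios} $\ell_1 / L$, since the ``alignment'' must work for every admissible $(\ell_1, \ell_2)$ without storing any auxiliary state. I expect this to require choosing the slot priorities for sub-VM~$i$ as its own low-discrepancy sequence intertwined with the position sequence in a scale-invariant manner (so that the matching remains valid whether one inserts a sub-VM~$1$ slot, a sub-VM~$2$ slot, or releases one). Once the priorities are designed to guarantee pairwise discrepancy $O(\log L)$, the cost bound in the lemma follows immediately from the argument above.
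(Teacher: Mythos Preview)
Your reduction of the rank difference $|r_s - r_p|$ to a discrepancy estimate is where the argument breaks. The discrepancy condition you state---that the number of active slot priorities in $[0,t)$ tracks the number of active position priorities in $[0,t)$ to within $O(\log L)$ for every $t$---only tells you that $r_s$ (the count of slot priorities below the new slot's priority $p$) is close to the count of \emph{position} priorities below $p$. But $r_p$ counts position priorities below the new position's priority $q$, and nothing in your setup forces $p$ and $q$ to be close. Concretely: from state $(\ell_1,0)$, an \texttt{allocate}$(2)$ always inserts the fixed slot $(2,1)$, whose rank among $\{(1,1),\dots,(1,\ell_1),(2,1)\}$ changes by at most $1$ as $\ell_1$ increments. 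Meanwhile the new position is $\ell_1+1$, and with bit-reversal priorities its rank among $\{1,\dots,\ell_1+1\}$ jumps from $2^t-1$ to $1$ as $\ell_1$ passes from $2^t-2$ to $2^t-1$. Hence $|r_s-r_p|=\Omega(L)$ at one of these two states, regardless of how well the global discrepancy is controlled. You flag exactly this coordination issue as the ``main obstacle,'' but ``scale-invariant intertwining'' is not a construction, and the sort-then-pair-by-rank scheme appears to be intrinsically unable to meet it: the priority of $(1,\ell_1{+}1)$ depends only on $\ell_1$, the priority of position $L{+}1$ depends only on $L=\ell_1+\ell_2$, and these must agree in rank for every $\ell_2$ simultaneously.

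The paper's proof avoids this by abandoning rank-based pairing altogether. Two ideas are essential. First, the two slot families are merged into a \emph{single contiguous interval of integers} via $(1,j)\mapsto 3L_{\max}-j+1$ and $(2,j)\mapsto 3L_{\max}+j$; this is what makes the bit-reversal hash yield $O(\log L)$ discrepancy on the slot side uniformly in $(\ell_1,\ell_2)$, rather than having to coordinate two separate families. Second, the matching itself is a multi-round ``parenthesis'' process on the unit circle: in each round, match every $a\in A$ to the clockwise-next point if it lies in $B$, remove matched pairs, and repeat. The cost bound then factors as (number of rounds)$\times O(1)$: an inductive argument shows that adding or removing one pair of endpoints perturbs the surviving sets in each round by at most two elements, and the number of rounds equals the depth of the parenthesis sequence, which is bounded by the $O(\log L)$ oscillation of the excess function $\gamma(y)=\alpha(y)-\beta(y)$. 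Crucially, this matching rule is robust to the new $a$ and new $b$ landing far apart on the circle---precisely the configuration on which your rank-based pairing incurs $\Omega(L)$ cost.
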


Given $A$ and $B$, this bijection is computed by a deterministic matching algorithm that has multiple rounds, inspired by \emph{consistent hashing}~\cite{KLLPLL97}. 
We first apply a ``hash'' function $h$ to map all elements in both sets onto points on a unit circle. 
Then in each round, we match all pairs $(a, b)\in A\times B$ such that $b$ is the \emph{next point on the circle} in the clockwise order after $a$.
All remaining elements proceed to the next round, until all elements have been matched.
See \cref{alg:adapter_matching}.

\begin{algorithm}[!ht]
    \caption{Matching algorithm}
    \label{alg:adapter_matching}
    \DontPrintSemicolon
    
    \SetKwFunction{fmatch}{Match}
    \SetKwProg{Fn}{Function}{:}{}
    
    \Fn(){\fmatch{$A$, $B$}} {
        $A_0 \gets A$\;
        $B_0 \gets B$\;
        $i\gets 0$\;
        \While{$A_i\neq \emptyset$} {
            $X\gets \{(a, b)\in A_i\times B_i: h(b)\textrm{ is the next point clockwise after }h(a)\}$\;
            \For{$(a, b)\in X$}{
                Set $\sigma_{\l_1,\l_2}(a) \gets b$ in the bijection\;
            }
            $A_{i+1}\gets A_i\setminus \{a:\exists b, (a,b)\in X\}$\;
            $B_{i+1}\gets B_i\setminus \{b:\exists a, (a,b)\in X\}$\;
            $i\gets i+1$
        }
    }
\end{algorithm}

To define the hash function $h$, we first represent each element in $A$ as an integer.
Let $\Lmax$ be a fixed parameter such that $\ell_1$ and $\ell_2$ are always at most $\Lmax$, which also means that $B \subseteq [1, 2\Lmax]$.
We represent $(1, j)$ as $3\Lmax - j + 1$ and $(2, j)$ as $3\Lmax + j$.
This representation makes $A$ a set of consecutive integers, i.e., the set of all the integers within the range of $(3\Lmax - \l_1, 3\Lmax + \l_2]$.
In particular, $A$ are represented as integers in $(2\Lmax, 4\Lmax]$, disjoint from $B$.

Next, we will construct a function $h:\mathbb{N}\rightarrow[0, 1)$ (one should think the endpoints of $[0, 1)$ are connected so that $h$ maps non-negative integers to a unit circle).
For any $x \in \mathbb{N}$, we can represent it in binary as $x = \bk{x_k x_{k-1} \cdots x_0}_2 \defeq \sum_{i=0}^k 2^{i}x_i$. Then, the hash value of $x$ is defined as $h(x) \defeq \bk{0.x_0 x_1 \cdots x_{k}}_2 = \sum_{i=0}^k 2^{-i-1}x_i$, i.e., reversing the bits of $x$ and then putting them after the binary point. 
Intuitively, this hash function maps a set of consecutive integers nearly uniformly to the unit circle.
For example, it maps the set $\midBk{2^t}$ onto the $2^t$ equidistant points of the circle for any $t$.
Note that the choice of $k$ in the binary representation of $x$ is not important -- leading zeros in $x$ do not change $h(x)$.
Also note that one does not have to know $\Lmax$ in advance.
Setting $\Lmax=2^t$ for any sufficiently large $t$ does not change the relative order of all images, thus, increasing $t$ as needed does not change the bijection produced by the matching algorithm.
It was introduced for the simplicity of analysis, and one may also view it as a tie-breaker.

The upper bound on the cost of this adapter is proved in two steps.
We first show that for any $\l_1,\l_2$, \cref{alg:adapter_matching} terminates in $O(\log L)$ rounds (\cref{lm:derandomized_num_rounds}).
Then we prove that the cost is bounded by the number of rounds \cref{alg:adapter_matching} runs, up to a constant factor (\cref{lem:dyn_induction}).
Thus, \cref{lem:two_way_adapter} is a direct corollary of these two lemmas.

\begin{lemma}
    \label{lm:derandomized_num_rounds}
    Fix $\Lmax > 0$. For any non-negative integers $\l_1, \l_2 \le \Lmax$ and $L = \l_1 + \l_2$, if we run \cref{alg:adapter_matching} on $A = (3\Lmax - \l_1, 3\Lmax + \l_2] \cap \mathbb{Z}$ and $B = [1, L] \cap \mathbb{Z}$ with function $h$ defined as above, then the algorithm will terminate within $O(\log L)$ rounds.
\end{lemma}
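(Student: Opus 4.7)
My plan is to reduce the bound to a discrepancy estimate for the bit-reversal hash by introducing a cumulative excess function on the circle. Fix a reference point $0$ lying in the interior of an arc, and for each round $t \ge 0$ let $A_t$, $B_t$ denote the surviving $A$- and $B$-points; I will consider
\[
E^{(t)}(s) \;=\; \bigl|A_t \cap [0, s)\bigr| - \bigl|B_t \cap [0, s)\bigr|
\]
for $s$ traversing the circle clockwise from $0$, and write $R_t = \max_s E^{(t)}(s) - \min_s E^{(t)}(s)$. The goal is to show that (i) $R_{t+1} \le R_t - 1$ whenever $R_t > 0$, and (ii) $R_0 = O(\log L)$, which together bound the number of rounds by $R_0 = O(\log L)$.

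\paragraph{Step (i).}
I would first observe that an arc at which $E^{(t)}$ attains its global maximum must be bordered on the left by an $A$-point and on the right by a $B$-point---else an adjacent arc would attain a strictly greater value---so every max-achieving arc is a ``peak arc'' that is consumed by some match performed in round $t+1$, and symmetrically every min-achieving arc is a ``valley arc'' ($B$ on the left, $A$ on the right). Using the pointwise identity $E^{(t+1)}(s) = E^{(t)}(s) - n_A(s) + n_B(s)$, where $n_A(s)$ and $n_B(s)$ count matched $A$- and $B$-points in $[0, s)$, each non-wrap match (with $A$ before $B$ in the linearized order) contributes $-1$ to $E^{(t+1)} - E^{(t)}$ on its short peak arc and $0$ elsewhere, while the at-most-one wrap-around match (whose peak arc contains the reference $0$) contributes $+1$ on its long non-peak side and $0$ elsewhere. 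A short case split then yields $R_{t+1} = R_t - 1$: without a wrap match, the max strictly drops by $1$ and the min is unchanged; with a wrap match, its $+1$ cancels the $-1$ on every non-wrap peak arc (so the max is preserved) and raises every valley arc by $1$ (so the min strictly rises).

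\paragraph{Step (ii).}
For a set $A$ of $L$ consecutive integers, the image $h(A)$ on $[0, 1)$ is low-discrepancy. Each dyadic arc $[j \cdot 2^{-k}, (j+1) \cdot 2^{-k})$ corresponds under $h$ to a single residue class modulo $2^k$, and a set of $L$ consecutive integers has either $\lfloor L/2^k \rfloor$ or $\lceil L/2^k \rceil$ representatives in every such class. Decomposing an arbitrary arc $I \subset [0, 1)$ into $O(\log L)$ dyadic sub-arcs and summing gives $\bigl|\,|h(A) \cap I| - L \cdot |I|\,\bigr| = O(\log L)$, and the same bound holds for $B = [1, L]$. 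Subtracting, $|E^{(0)}(s)| = O(\log L)$ for every $s$, hence $R_0 = O(\log L)$.

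\paragraph{Main obstacle.}
The heart of the argument is Step (i), and especially the wrap-match case, where the maximum of $E$ may remain fixed and one must pin down that the minimum strictly rises to compensate. The crucial observation is that the wrap match induces a uniform $+1$ shift of $E^{(t+1)} - E^{(t)}$ across the long arc between its $B$-point and its $A$-point, which contains every valley arc; organizing the contributions of all simultaneous matches (which depends on whether the reference $0$ falls inside a peak arc) is the technical crux. Step (ii) is standard discrepancy theory for the van der Corput sequence.
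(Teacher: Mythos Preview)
Your proof is correct and lands on the same key quantity as the paper: both bound the number of rounds by the range $\max_s E^{(0)}(s) - \min_s E^{(0)}(s)$ of the excess function on the circle, and then bound that range by $O(\log L)$ using the low-discrepancy of the bit-reversal hash. The difference lies in how Step~(i) is argued. The paper cuts the circle at a point $y^*$ minimizing the excess, obtaining a \emph{valid} parenthesis sequence, and then asserts as folklore that the number of rounds equals the depth of this sequence, which is exactly $\max\gamma - \min\gamma$. You instead prove $R_{t+1} \le R_t - 1$ directly by a wrap/non-wrap case analysis on the matched pairs; this is essentially a self-contained proof of the folklore step the paper takes for granted, and it has the pleasant feature of never having to choose a cut point. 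For Step~(ii) the two arguments are dual: you decompose the query arc $[0,s)$ into dyadic pieces (each corresponding to a residue class mod $2^k$), while the paper decomposes the point sets $A$ and $B$ into $O(\log L)$ sub-intervals of power-of-two lengths and bounds $|\gamma_i(y)|\le 1$ for each piece. The paper's version is slightly cleaner, since an arbitrary arc does not literally decompose into $O(\log L)$ dyadic sub-arcs without first rounding $s$ to a multiple of $2^{-\lceil\log_2 L\rceil-1}$ and absorbing an $O(1)$ error.
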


The process of matching elements in $A$ and $B$ is similar to matching a parenthesis sequence in the following sense. 
We view the elements in $A$ as opening parentheses, elements in $B$ as closing parentheses, cut the unit circle at some point $y \in [0,1)$, and turn it into an interval, thereby obtaining a parenthesis sequence.
The matching algorithm can be thought of as matching this parenthesis sequence: we match all the adjacent opening parentheses and closing parentheses in the first round, remove them from the sequence, and then match the remaining parentheses in the second round and remove them, and so on.
It should be noted that the parenthesis sequence obtained in this way may not be \emph{valid}, i.e., there may be a prefix where there are more closing parentheses than opening parentheses, making it impossible to match them perfectly. 
However, by carefully choosing the point $y$ to cut, we can ensure the validity of the parenthesis sequence, which will be explained below.
In this case, the number of rounds is simply the \emph{depth} of this parenthesis sequence.
This structure allows us to prove the lemma.
\begin{proof}
    We first cut the circle at the origin, turning it into an interval $[0,1)$, and obtain a (maybe not valid) parentheses sequence. Consider the excess function $\gamma(y)$ of this parenthesis sequence, which outputs the number of opening parentheses minus the number of closing parentheses in any prefix of this sequence. Formally, for each $y\in [0,1]$, $\gamma(y) \defeq \alpha(y) - \beta(y)$, where $\alpha(y)$ and $\beta(y)$ counts the number of opening parentheses and closing parentheses in $[0,y)$, respectively. As this parenthesis sequence comes from a circle with an equal number of opening and closing parentheses, we have $\gamma(1) = 0$, and we can extend $\gamma(\cdot)$ as a periodic function over real numbers. 

    Clearly, the validity of this parenthesis sequence depends on whether $\gamma(y)$ is non-negative. If $\gamma(y)$ is non-negative for any $y\in [0,1]$, the parentheses sequence is already valid. Otherwise, we select a point $y^*$ that \emph{minimizes} $\gamma(y)$, and cut the circle at $y^*$ instead, turning it into an interval $[y^*, 1+y^*)$. By shifting the division point from the origin to $y^*$, the excess function changes from $\gamma(y)$ to $\gamma(y) - \gamma(y^*)$, which is always non-negative due to the choice of $y^*$.

    Now, we get a valid parenthesis sequence by cutting at the point $y^*$, with depth 
    \begin{align*}
        \max_{y \in [0,1]} \bk{\gamma(y) - \gamma(y^*)}.
    \end{align*}
    It remains to bound $\gamma(\cdot)$, i.e., the following claim implies this lemma.

    \begin{claim}
        \label{clm:reverse_bit_hashing_property}
        For any $y \in [0,1)$, $\abs{\gamma(y)} \le O(\log L)$.
    \end{claim}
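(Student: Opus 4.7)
The plan is to reduce the bound on $|\gamma(y)| = |\alpha(y) - \beta(y)|$ to a discrepancy estimate for the bit-reversal hash $h$. Since both $A = (3\Lmax - \l_1, 3\Lmax + \l_2] \cap \Z$ and $B = [1, L] \cap \Z$ are sets of $L$ consecutive integers, it suffices to prove a uniform statement: for any set $S$ of $n$ consecutive integers and any $y \in [0,1)$,
\[
    \bigl|\{x \in S : h(x) \in [0, y)\}\bigr| = y \cdot n \pm O(\log n).
\]
Applying this to $S = A$ and $S = B$ (both of size $n = L$) and subtracting would immediately yield $|\gamma(y)| = O(\log L)$, because the leading $yL$ terms cancel.

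To prove the discrepancy estimate, I would decompose any interval of $n$ consecutive integers into at most $O(\log n)$ \emph{aligned dyadic} blocks, namely intervals of the form $[c \cdot 2^t, (c+1) \cdot 2^t)$ whose left endpoint is a multiple of the length. This is a standard ``Fenwick-style'' decomposition: strip off low-order pieces at the left end, high-order pieces at the right end, and one uses $O(\log n)$ blocks in total. For a single aligned dyadic block of size $2^t$, expanding the bit-reversal definition of $h$ on $x = c \cdot 2^t + k$ with $0 \le k < 2^t$ yields the identity
\[
    h(c \cdot 2^t + k) = h(k) + 2^{-t} \cdot h(c),
\]
because the low $t$ bits of $x$ are the bits of $k$ and the higher bits are the bits of $c$. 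Since bit-reversal is a bijection on $t$-bit strings, $\{h(k) : 0 \le k < 2^t\}$ equals $\{j / 2^t : 0 \le j < 2^t\}$, and so the hash values of the block form a shifted copy of the $2^t$ equispaced points on the unit circle. Any such shifted equispaced set contributes either $\lfloor 2^t y \rfloor$ or $\lceil 2^t y \rceil$ elements to $[0, y)$, giving error at most $1$ per block. Summing over the $O(\log n)$ blocks proves the estimate.

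The main obstacle is mostly bookkeeping: verifying the identity $h(c \cdot 2^t + k) = h(k) + 2^{-t} h(c)$ from the binary definition and keeping track of the shift $2^{-t} h(c)$ on the circle when counting into $[0, y)$. Two mild subtleties deserve care: first, the $O(\log n)$ dyadic decomposition must handle the fact that $A$ and $B$ lie at different absolute locations on the number line (the high-order bits of elements of $A$ differ from those of $B$), but the estimate depends only on $n$, not on the location, so this is harmless; second, one should verify that the error terms from the $O(\log n)$ blocks add rather than cancel unfavorably, which they do since we are using a triangle-inequality-style bound. Once these details are in place, the conclusion $|\gamma(y)| \le O(\log L)$ follows immediately, and in view of the discussion preceding the claim this is what is needed to bound the depth of the matched parenthesis sequence and hence the number of rounds of \cref{alg:adapter_matching}.
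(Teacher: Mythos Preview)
Your proposal is correct and follows essentially the same route as the paper. Both arguments exploit the key identity $h(c\cdot 2^t + k) = h(k) + 2^{-t}h(c)$ to show that a length-$2^t$ block of consecutive integers is mapped by $h$ to a shifted copy of the $2^t$ equispaced points, hence contributes discrepancy at most $1$; both then decompose $A$ and $B$ into $O(\log L)$ such blocks and sum.

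The only cosmetic difference is in the decomposition. You use aligned dyadic blocks $[c\cdot 2^t,(c+1)\cdot 2^t)$, which costs up to $2\log L$ pieces per interval. The paper instead observes that the equidistribution property holds for \emph{any} run of $2^t$ consecutive integers (aligned or not), since the residues modulo $2^t$ are hit exactly once; it can therefore partition each of $A$ and $B$ into only $s$ pieces, where $s$ is the popcount of $L$, and pair corresponding pieces of $A$ and $B$ to bound $|\gamma_i(y)|\le 1$ directly rather than bounding $\alpha$ and $\beta$ against $yL$ separately. Your version is a touch more general (it yields a standalone discrepancy bound for a single interval), while the paper's is slightly tighter in the constant; neither difference is material to the $O(\log L)$ conclusion.
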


    To prove the claim, we first consider the simple case where $L$ is a power of two, i.e., $L = 2^k$. In this simple scenario, a crucial observation is that all $2^k$ elements of $A$ are distributed almost evenly around the circle, and the same applies to the elements of $B$. Specifically, between any two adjacent $2^{k}$-equidistant points of $[0,1]$, there is exactly one element of $A$ and one of $B$. Formally, for any $i \in [2^{k}]$,
    \begin{align*}
        \label{eq:hash_equidistant} 
        \alpha\bk{\frac{i+1}{2^k}} - \alpha\bk{\frac{i}{2^k}} = 1,  \numberthis
    \end{align*}
    and so is $\beta(\cdot)$.

    To see \eqref{eq:hash_equidistant}, we examine how the ``hash'' function $h$ maps elements of $A$ to $[0,1)$. Recall that for any $x \in \mathbb{N}$, $h$ reverses the binary bits of $x$ and places them after the binary point, so that the value $h(x)$ is mostly decided by the last $k$ bits of $x$. Specifically, the integer $x$ can be uniquely written as $x = q \cdot 2^k + r$, where $q\in \mathbb{N}$, and $0 \le r < 2^k$ is the integer formed by the last $k$ bits of $x$. Then, $h(x) = h(r) + 2^{-k}h(q) \in [h(r), \, h(r) + 2^{-k})$, where $h(r)$ is one of the $2^k$ equidistant points on the circle. When $x$ goes over $A$ (an interval of length $2^k$), $r$ goes over $\midBk{2^k}$, and $h(r)$ goes over all the $2^k$ equidistant points of the circle. Hence for any $i \in \midBk{2^k}$, there is exactly one element $x\in A$ such that $x$ is placed in the interval $[ i/2^{k},\,  (i+1)/2^{k})$, which proves \eqref{eq:hash_equidistant}.

    As an immediate corollary of \eqref{eq:hash_equidistant}, we can see
    \begin{align*}
        \alpha(y) = \bigfloor{2^k y} \text{ or } \bigfloor{2^ k y} + 1,
    \end{align*}
    and so is $\beta(\cdot)$. Hence, when $L$ is a power of two, $\abs{\gamma(y)} \le 1$.

    To handle the general case where $L$ is not a power of two, we can first partition both $A$ and $B$ into $O(\log L)$ smaller intervals of lengths powers of two.
    Suppose there are $s$ bits that are one in the binary representation of $L$, then we can write $L$ as a sum of $s$ powers of two, and correspondingly partition $A$ and $B$ into $s$ small intervals.
    Similarly, we can define the quantities $\BK{\alpha_i(y)}$, $\BK{\beta_i(y)}$, and $\BK{\gamma_i(y)}$ for these subintervals, where $\alpha_i(y)$, $\beta_i(y)$, and $\gamma_i(y)$ are defined for the subintervals with the $i$-th smallest size. By applying the result of the simple case above, we get $\abs{\gamma_i(y)} \le 1$ for each $i$.
    Finally, by observing that $\gamma(y)=\sum_{i=1}^s\gamma_i(y)$, we prove $\abs{\gamma(y)} \leq O(\log L)$.
    This implies the lemma.
\end{proof}

We proceed to prove that the cost of the algorithm is bounded by the number of rounds.

\begin{lemma}
  \label{lem:dyn_induction}
  Let $(A', B')$ be a pair of intervals obtained by applying one allocation or release to $(A, B)$, and $T$ be the number of rounds \cref{alg:adapter_matching} runs on $(A,B)$.
  Then \cref{alg:adapter_matching} matches at most $O(T)$ elements differently on $(A', B')$ compared to $(A, B)$.
\end{lemma}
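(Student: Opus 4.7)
The plan is to couple the two executions of \cref{alg:adapter_matching} on $(A, B)$ and $(A', B')$ round by round, and to show that the disturbance from the single insertion or deletion stays local throughout all $T$ rounds. Let $A_i, B_i$ and $A_i', B_i'$ denote the sets remaining after round $i$ of the two executions, and set
\[
c_i \defeq |A_i \triangle A_i'| + |B_i \triangle B_i'|,
\]
and let $e_i$ be the number of pairs matched in round $i$ in exactly one of the two executions. Since one allocation or release changes $A$ and $B$ by exactly one element each, $c_0 = 2$, and the total number of elements of $A$ whose image changes is at most $\sum_{i=0}^{T-1} e_i$ plus $1$ for the newly inserted or deleted element. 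Hence it suffices to prove $e_i = O(1)$ for every round.

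The key step is the invariant $c_i = O(1)$, which I would establish via the parenthesis-sequence interpretation from the proof of \cref{lm:derandomized_num_rounds}. Place $A_i \cup B_i$ on the circle, view elements of $A_i$ as opening and those of $B_i$ as closing parentheses, and recall that one round of the algorithm simultaneously removes every already-adjacent ``\texttt{()}'' pair. The two cyclic parenthesis sequences associated with $(A_i, B_i)$ and $(A_i', B_i')$ differ at only the $c_i$ positions counted above; consequently, their round-$i$ matchings can disagree only inside a constant-size neighborhood of these positions, giving $e_i = O(c_i)$. Careful bookkeeping on the fate of the disagreeing elements---any element paired in only one execution is removed from that execution only, balancing the symmetric differences on the $A$- and $B$-sides---then shows that $c_{i+1}$ also stays bounded by a constant, closing the induction.

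Combining $e_i = O(1)$ with $T$ rounds yields the claimed $O(T)$ bound on the number of elements of $A$ whose image under $\sigma_{\l_1, \l_2}$ changes. The main obstacle is the inductive locality step: one must argue that after simultaneously removing all adjacent ``\texttt{()}'' pairs, the surviving parentheses that become newly adjacent at the next round can disagree only inside an $O(1)$-sized region, even though different inner pairs may have been removed in the two executions. This requires a local case analysis of how the removal ``closes up'' around the perturbation, and ultimately relies on the fact that the perturbation's footprint is kept bounded by $c_i = O(1)$ at every round.
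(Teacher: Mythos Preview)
Your framework is the paper's: couple the two runs round by round and show the symmetric difference $c_i$ stays bounded, so that $\sum_i e_i = O(T)$. The paper sharpens your invariant $c_i = O(1)$ to the exact statement $c_i \le 2$, and---crucially---tracks the \emph{shape} of that difference throughout: at every round one of four configurations holds, namely $A_i' = A_i \cup\{a\}$ and $B_i' = B_i \cup\{b\}$; the reverse; one side equal and the other with a single element swapped; or both sides equal. The ``careful bookkeeping'' you defer is precisely this four-case analysis. For instance in the first case, each of the extra elements $a,b$ has three possible fates in round $i$ of the primed run: it is left unmatched; it is matched to an element that would otherwise have been unmatched; or it steals the partner of some $a'$ (resp.\ $b'$), which then becomes the new extra element. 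Each fate contributes exactly one element of difference at round $i+1$, so the configuration list is closed under one round.

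Two cautions. First, your chain ``$e_i = O(c_i)$, hence $c_{i+1}$ stays bounded by a constant'' hides a real danger: from $e_i = O(c_i)$ alone one only gets $c_{i+1} = O(c_i)$, which could grow geometrically over $T$ rounds. You genuinely need the structural case analysis (or an equivalent argument) to get $c_{i+1} \le 2$ rather than merely $c_{i+1} \le C\cdot c_i$; the parenthesis picture by itself does not give this without that analysis. Second, a small point you omitted: the run on $(A',B')$ may take $T+1$ rounds rather than $T$. The paper handles this by observing that once $A_T = B_T = \emptyset$, the invariant forces $|A_T'| = |B_T'| \le 1$, so at most one more round is needed.
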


\begin{proof}
  We will prove by induction that, for all $i\geq 0$, the sets obtained in round $i$ of \cref{alg:adapter_matching} $(A'_i, B'_i)$ can differ from $(A_i, B_i)$ by at most two elements: Since $\left|A'_i\right|=\left|B'_i\right|$ and $\left|A_i\right|=\left|B_i\right|$, either
  \begin{enumerate}[label=(\alph*)]
  \item $A_i'=A_i\cup \{a_i\}$ and $B_i'=B_i\cup \{b_i\}$, or
  \item $A_i=A_i'\cup \{a_i\}$ and $B_i=B_i'\cup \{b_i\}$, or
  \item one of $A'_i$ and $B_i'$ is the same as $A_i$ or $B_i$, while the other has one extra element and one missing element, or
  \item $A_i=A_i'$ and $B_i=B_i'$.
  \end{enumerate}

  For $i = 0$, the claim holds since $(A',B')$ is obtained from $(A,B)$ by applying one allocation or release.
  Below, we assume the claim holds for some $i$, and prove it for $i + 1$.

  For case $(a)$, if $a_i$ matches to $b_i$, then in round $i+1$, $A_{i+1}=A_{i+1}'$ and $B_{i+1}=B_{i+1}'$.
  Otherwise, each of $a_i$ and $b_i$ is in one of the following three cases.
  \begin{itemize}
  \item $a_i$ [resp.~$b_i$] is left unmatched. In this case, $a_i$ [resp.~$b_i$] becomes an extra element in $A_{i+1}'$ [resp.~$B_{i+1}'$] compared to $A_{i+1}$ [resp.~$B_{i+1}$].
  \item $a_i$ [resp.~$b_i$] matches to some $b$ [resp.~$a$] that would have been unmatched without $a_i$ [resp.~$b_i$]. In this case, $B_{i+1}'$ [resp.~$A_{i+1}'$] has one fewer element $b$ [resp.~$a$] compared to $B_{i+1}$ [resp.~$A_{i+1}$].
  \item $a_i$ [resp.~$b_i$] matches to some $b$ [resp.~$a$] that would have been matched to some other $a'$ [resp.~$b'$] without $a_i$. In this case, $a'$ [resp.~$b'$] becomes the extra element in $A_{i+1}'$ [resp.~$B_{i+1}'$] compared to $A_{i+1}$ [resp.~$B_{i+1}$].
  \end{itemize}
  In all three cases, each of $a_i$ and $b_i$ leads to one element difference in round $i+1$.
  Hence, the claim holds for round $i+1$ for case (a).

  By applying the same argument, the claim also holds for cases (b) -- (d) as well.
  Moreover, observe that when $\left|A_i'\right|=\left|B_i'\right|=1$, the algorithm must terminate in one round, so the number of rounds that the algorithm runs on $(A',B')$ is at most $T+1$. As only $O(1)$ elements can change their matching in each round, the lemma thus holds.
\end{proof}

Using a two-way adapter, we can maintain two sub-VMs $V_1,V_2$ of sizes $\l_1,\l_2$ respectively on a large super-VM $V$ of size $L=\l_1+\l_2$:
Given $\l_1,\l_2$, by using the bijection $\sigma_{\l_1,\l_2}$ according to \cref{lem:two_way_adapter}, we store the $j$-th word of $V_i$ in the $\sigma_{\l_1,\l_2}(i,j)$-th word of $V$.
Each time we allocate or release a word in one of the sub-VMs and change $\l_1$ or $\l_2$, we switch to a bijection with new sizes.
\cref{lem:two_way_adapter} guarantees that only $O(\log L)$ words will be stored in different places, incurring $O(\log L)$ word-accesses in $V$ to relocate them.
Furthermore, by precomputing all bijections $\sigma_{\l_1,\l_2}$ for $\l_1,\l_2\leq \Lmax$ and the $O(\log L)$ relocations for all possible allocations and releases, we obtain a lookup table of $O(\Lmax^3)$ words in linear time, thereafter, each word-access to $V_i$ takes $O(1)$ time to find its location in $V$, and each allocation or release takes $O(\log L)$ time to identify the relocations.
We proved the following lemma.

\begin{lemma}\label{lem:two_way_adapter_final}
  We can store two small VMs (sub-\!VMs) $V_1, V_2$ of size at most $\Lmax$ in a large VM (super-\!VM) $V$ via a two-way adapter with no redundancy.
  Moreover, by storing a lookup table of $O(\Lmax^3)$ words which can be computed in linear time, each word-access to a sub-\!VM $V_i$ can be done in constant time followed by one word-access to $V$. Each allocation and release on a $V_i$ can be done with $O(\log L)$ time followed by $O(\log L)$ word-accesses to $V$.
\end{lemma}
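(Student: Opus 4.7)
The plan is to combine \cref{lem:two_way_adapter} with a straightforward precomputed lookup table; no substantially new ideas are needed beyond packaging what was already established.

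First, I would fix the layout. For each pair $(\l_1,\l_2)$ with $\l_1+\l_2=L$, the bijection $\sigma_{\l_1,\l_2}$ from \cref{lem:two_way_adapter} tells us exactly where to put each sub-VM word: the $j$-th word of $V_i$ lives at address $\sigma_{\l_1,\l_2}(i,j)$ of $V$. Because $\sigma_{\l_1,\l_2}$ depends only on the pair $(\l_1,\l_2)$, which is tracked outside the VM (the super-VM length $L$ is given to us, and once we expose one of $\l_1,\l_2$ the other is determined), this scheme needs no auxiliary bits in $V$, giving zero redundancy.

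Next, I would describe the lookup table. For every pair $(\l_1,\l_2)\in[0,\Lmax]^2$ I precompute and store: (i) the bijection $\sigma_{\l_1,\l_2}$ as a table of $L=\l_1+\l_2$ entries indexed by $(i,j)$, together with its inverse; and (ii) for each of the (at most four) neighbor pairs reachable by a single allocation or release, the list of positions on which $\sigma_{\l_1,\l_2}$ and the neighboring bijection disagree. By \cref{lem:two_way_adapter}, each such disagreement list has size $O(\log L)$. The dominant cost is the bijection tables themselves, totaling $\sum_{\l_1,\l_2\le \Lmax} O(L)=O(\Lmax^3)$ words; the disagreement lists contribute only $O(\Lmax^2\log\Lmax)$ words, which is absorbed. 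To compute the table in time linear in its size, I would build it incrementally: start from $\sigma_{0,0}$ and process the pairs in an order that changes one coordinate at a time, copying the previous bijection ($O(\Lmax)$ time) and then patching in the $O(\log\Lmax)$ differences (which themselves can be derived by simulating \cref{alg:adapter_matching} on the affected subsequence, or simply read off from a scratch run). The total preprocessing time is $O(\Lmax^3)$.

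Finally, I would read off the time bounds. An access to the $j$-th word of $V_i$ consults the precomputed table at index $(\l_1,\l_2,i,j)$ in $O(1)$ time to obtain the super-VM address, then performs one word-access to $V$. An \texttt{allocate}($i$) or \texttt{release}($i$) looks up the $O(\log L)$-entry disagreement list for the transition from $\sigma_{\l_1,\l_2}$ to $\sigma_{\l_1',\l_2'}$, and for each pair of positions that swap images, performs a constant number of reads and writes on $V$; this is $O(\log L)$ arithmetic steps and $O(\log L)$ word-accesses to $V$, as claimed.

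No single step is particularly hard; the only mild subtlety is the ``linear time'' claim for the preprocessing, which forces the incremental construction above rather than rerunning \cref{alg:adapter_matching} from scratch on every pair (which would cost an extra logarithmic factor).
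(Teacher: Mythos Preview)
Your proposal is correct and follows essentially the same approach as the paper: store each sub-VM word at address $\sigma_{\l_1,\l_2}(i,j)$ in $V$, precompute all bijections $\sigma_{\l_1,\l_2}$ together with the $O(\log L)$-sized difference lists for each allocation/release, and read off the time bounds directly from \cref{lem:two_way_adapter}. The paper's own argument is the short paragraph immediately preceding the lemma and does not spell out the linear-time preprocessing; your incremental construction is a reasonable way to justify that claim.
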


\section{Dynamic aB-trees}
\label{sec:daB}

In this section, we define and present dynamic augmented B-trees.
We start by defining dynamic augmented B-trees as follows. 

\begin{definition}
  \label{def:daB_tree}
  A \emph{dynamic augmented B-tree} (\emph{daB-tree} for short) of size $n$ is a data structure maintaining an array $A[1..n]$ of elements from the alphabet $\Sigma$, where $n$ is a power of $B$. Like a normal B-tree, it is a full $B$-ary tree of $\log_B n$ levels with elements of $A$ in its leaves. Additionally:
  \begin{itemize}
  \item Every node is augmented with a label from a set $\Phi$. The label of a leaf is determined by its array element $A[i]$; the label of an internal node $u$ is determined by the labels of its $B$ children and the size of the subtree, i.e., $\phi_u = \labfunc(\phi_{1}, \ldots, \phi_{B}, n_u)$ for some function $\labfunc$, where $\phi_u$ is the label of node $u$, $\phi_1, \ldots, \phi_B$ are the labels of the children of $u$, and $n_u$ is the size of the subtree rooted at $u$ (the size of the subtree is defined as the number of leaves in it).
  \item There is a query algorithm. It starts from the root and repeatedly recurses into a child of the current node. At each step, it decides which child to recurse to by examining the labels of the $B$ children. When a leaf is examined, the algorithm outputs the query answer. It is assumed that the query algorithm spends constant time on each node when the daB-tree is not compressed, i.e., the running time is $O(\log_B n)$ per query.
  \item There is an update algorithm which allows us to modify a single element $A[i]$ at a time. When $A[i]$ is changed to a different element $\sigma \in \Sigma$, its augmented label $\phi$ should also be changed according to the fixed function; so do all $A[i]$'s ancestors. 
  \end{itemize}
\end{definition}

Throughout this paper, we will focus on the case $B = 2$, i.e., all daB-trees are binary trees.

As in \cite{patrascu2008succincter}, we define $\treenum[n, \phi]$ as the number of instances of the array $A[1..n]$ that induces a root label of $\phi$. It can be computed recursively by
\[
  \treenum[n, \phi] = \sum_{\phi_1, \phi_2 \,:\, \labfunc(\phi_1, \phi_2, n) = \phi} \treenum[n/2, \phi_1] \cdot \treenum[n/2, \phi_2].
\]
The goal space usage for a daB-tree is $\log \treenum[n, \phi]$, which is necessary in order to distinguish all $\treenum[n, \phi]$ instances. Here we assume the label $\phi$ is stored outside the daB-tree, and will be recovered before accessing the tree.

In this section, we present a succinct presentation of daB-trees when each update is assumed to change $\log \treenum[n_u, \phi_u]$ by at most $O(1)$ words. We will remove this requirement in the next section.

\begin{theorem}
  \label{thm:dyn_aBtree}
  Suppose there is a constant integer $\beta$ such that $\beta w \ge \log n + \log |\Phi|+ \log |\Sigma| + 100$. We can maintain a daB-tree of $n$ elements, such that:
  \begin{itemize}
  \item The daB-tree is stored within $\log \treenum[n, \phi] + 2$ bits in the virtual memory model, assuming free access to the root label $\phi$.
  \item Each query takes $O(\log^2 n)$ time.
  \item Each update to $A[i]$ takes $O(\log^3 n)$ time, assuming that for every node $u$ lying on the path from $A[i]$ to the root, $\abs{\log \treenum[n_u, \phi_u] - \log \treenum[n_u, \phi'_u]} = O(w)$ holds, where $n_u$ and $\phi_u$ refer to the subtree size and the original label of $u$; $\phi'_u$ denotes the label of $u$ after the update.
  \end{itemize}
  Lookup tables of $O(|\Sigma| + |\Phi|^2 n^3 \log n)$ words are precomputed to support the above operations. These tables only depend on $n$, the daB-tree algorithm, and global randomness; they can be shared between multiple daB-tree instances with the same $n$. These tables can be precomputed in time linear in their total size.
\end{theorem}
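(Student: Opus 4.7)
The plan is to construct the encoding recursively, combining two sub-daB-trees at every internal node: the sub-word leftovers are glued by Pătrașcu's spillover trick, while the complete-word parts are glued by a two-way adapter (\cref{lem:two_way_adapter_final}). Concretely, at every node $u$ of subtree size $n_u$ and label $\phi_u$, I maintain an encoding $(k, m, \incword) \in [K(n_u, \phi_u)] \times (\{0,1\}^w)^{\l(n_u,\phi_u)} \times \{0,1\}^{M(n_u,\phi_u)-w\l(n_u,\phi_u)}$, choosing $K(n_u,\phi_u) = \Theta(n_u^2)$ so that the per-level rounding loss is $O(1/n_u^2)$ bits and so that $k$ and $\incword$ each fit in $O(1)$ words (which is guaranteed by the $\beta w \ge \log n + \log|\Phi| + \log|\Sigma| + 100$ hypothesis). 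The spill $k$, the incomplete word $\incword$, and the label $\phi_u$ are always handed to $u$ from outside; only the $\l(n_u,\phi_u)$ complete words live inside $u$'s VM, which in turn is a sub-VM of the parent's VM via an adapter shared with its sibling.

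For the recursive combine, given $u$ with children labels $\phi_1,\phi_2$ and child encodings $(k_i, m_i, \incword_i)$: because $\labfunc(\phi_1,\phi_2,n_u)=\phi_u$ is fixed, one precomputed table indexed by $(n_u,\phi_u,k)$ bijectively decodes $k$ into $(\phi_1,k_1,\phi_2,k_2,\text{residue})$ (and inverts for the combine direction); the residue concatenated with $\incword_1,\incword_2$ becomes the new $\incword$ of $u$, which occupies $O(1)$ words by construction. The complete-word arrays $m_1,m_2$ are placed in $u$'s VM via a two-way adapter, incurring no additional redundancy. Telescoping the $O(1/n_u^2)$ per-level overheads across all $\log n$ levels is bounded by $2$ bits, exactly as in Pătrașcu's analysis, so the root encoding fits in $\log \treenum[n,\phi] + 2$ bits conditioned on $\phi$. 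The precomputed tables — spillover decomposition tables across all $(n',\phi_1,\phi_2)$, plus adapter relocation tables from \cref{lem:two_way_adapter_final} with $\Lmax=O(n)$ — fit in the claimed $O(|\Sigma| + |\Phi|^2 n^3 \log n)$ words and can be built in linear time in that size.

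A query descends $\log n$ levels. At a depth-$d$ node, with $(\phi,k,\incword)$ already in registers, one table lookup extracts $(\phi_1,k_1,\phi_2,k_2,\text{residue})$ and the missing bits of the chosen child's $\incword_i$ are read with $O(1)$ VM word accesses from $u$'s VM before descending. A VM word access at depth $d$ must be translated through $d$ nested adapters, each contributing $O(1)$ arithmetic and one parent-level access, for $O(d)$ total time per VM word access. Summing $O(d)$ across $d=0,\ldots,\log n - 1$ gives the $O(\log^2 n)$ query time.

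The update is the main technical burden and is where the $O(w)$-change hypothesis is essential. After modifying $A[i]$, I walk from the affected leaf up to the root, recomputing $(k,\incword,\phi)$ at each ancestor and comparing $\l(n_u,\phi_u)$ to $\l(n_u,\phi'_u)$. The hypothesis forces this difference to be $O(1)$ words at every level, so the parent's adapter needs only $O(1)$ \texttt{allocate}/\texttt{release} calls on $u$'s VM. Each such call triggers $O(\log n)$ adapter relocations by \cref{lem:two_way_adapter_final}, and each relocation is itself a VM write at the parent's level (depth $d-1$), costing $O(d)$ time by the same nested-adapter translation used for queries; hence each level costs $O(d \log n)$ time, and the sum across all $d \le \log n$ yields $O(\log^3 n)$. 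The main obstacle is precisely this cascade: a size change at a deep level forces a size change in every ancestor's VM, each of which drives $O(\log n)$ relocation writes that themselves traverse nested adapter layers. Without the bounded-change hypothesis, a single leaf update could change $\l(n_u,\phi_u)$ by $\Theta(n_u)$ at some level, turning ``$O(1)$ allocations per level'' into $\Theta(n)$ and wrecking both the time bound and the invariant that $k$, $\incword$ fit in $O(1)$ words during the transition; lifting this restriction is exactly what \cref{sec:remove_assumption} will address.
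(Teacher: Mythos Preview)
Your high-level strategy and the time analyses are essentially right, but the encoding step has a real gap. The key idea you are missing is the paper's \emph{cut} (its Step~2). The paper first appends the children's incomplete words $\incword_1,\incword_2$ to the end of the adapter's super-VM, forming the concatenated memory $\mcat$ of $M(n/2,\phi_1)+M(n/2,\phi_2)$ bits, and then cuts $\mcat$ at a position $\Mleft$ that depends only on the parent label $\phi$ (zero-padding if needed). The short variable-length tail $\mmright$ (at most $4\beta w$ bits) is what gets fed, together with $(\phi_1,\phi_2,k_1,k_2)$, into P\v{a}tra\c{s}cu's spillover lemma, producing $(k^*,m^*)$ with $M^*$ memory bits. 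The point is that $M(n,\phi)=\Mleft+M^*$ then depends on $(n,\phi)$ alone. In your scheme, by contrast, the adapter holds $\l_1+\l_2$ complete words (which depends on $\phi_1,\phi_2$), and you push ``residue${}+\incword_1+\incword_2$'' into the new $\incword$; but that sum can exceed $w$ bits, so it cannot be an incomplete word, and if the overflow goes into complete words your $M(n_u,\phi_u)$ is no longer a function of $(n_u,\phi_u)$. Then at the next level up the adapter cannot recover its sub-VM sizes from the decoded labels. Separately, the spill $k$ alone ($O(\log n)$ bits) cannot bijectively decode into $(\phi_1,k_1,\phi_2,k_2,\text{residue})$: in the paper's lemma the decode consumes both $k^*$ and the $M^*$ memory bits $m^*$.

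A secondary but fatal issue is your choice $K(n_u,\phi_u)=\Theta(n_u^2)$ with $n_u$ the \emph{subtree} size. The per-node rounding loss is then $\Theta(1/n_u^2)$, and summing over all $2n-1$ nodes---in particular the $n$ leaves with $n_u=1$---gives $\Theta(n)$ bits of redundancy, not $2$. The paper fixes a single global parameter $r=12n$ (the root's $n$) throughout the recursion, so every node contributes $O(1/n)$ bits and the total telescopes to $O(1)$; this is also why $\beta w\ge \log n + \cdots$ uses the global $n$.
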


Note that the space usage of our daB-tree depends on the root label $\phi$. For instance, when using daB-trees for sparse \textsc{Rank}/\textsc{Select} problem where $m \ll n$ elements in $A$ are 1s, with the label of each node being the number of 1s in the subtree rooted at this node, $\treenum[n, m] = \binom{n}{m}$, thus we need $\log \binom{n}{m} + 2$ bits of memory. As updates are made to the array $A$, the required memory may change, in which case the daB-tree allocates or releases memory words according to the virtual memory model (see \cref{sec:virtual_memory}). To avoid the $w$-bit redundancy resulting from rounding up the representation to an integer number of complete words, we allow an incomplete word at the end of the VM to be stored.

\begin{remark}
  An uncompressed daB-tree can support queries and updates in $O(\log n)$ time. As we compress the daB-tree down to $\log \treenum[n, \phi] + 2$ bits, these operations become slightly slower.
\end{remark}

\begin{remark}
  The $\poly n$ words occupied by the lookup table will not be a bottleneck: Suppose we want to maintain an array $A[1..N]$, we set $n = \poly \log N$ and divide $A$ into $N/n$ subarrays of $n$ elements each, then maintain each subarray using a daB-tree. The lookup table size $\poly n = O(\poly \log N)$ is then negligible.
\end{remark}

\subsection{Proof of Theorem \ref{thm:dyn_aBtree}}

Following the discussion in \cref{sec:virtual_memory}, we represent the daB-tree using three parts: a series of complete words, an incomplete word, and a spill.
Our design of the daB-tree is recursive, using the fact that a subtree of the daB-tree is still a daB-tree of smaller size.
We first construct smaller daB-trees for the children of a node and then combine $B=2$ smaller daB-trees to form a larger daB-tree. 
To combine the complete words from smaller daB-trees which are stored in two separate VMs, we use an adapter introduced in \cref{sec:adapter}. It allows us to store a larger VM (super-VM) while simulating the operations on two smaller VMs (sub-VMs). This super-VM is the main part of the larger, combined daB-tree. The remaining parts, i.e., two incomplete words, two spills, and the labels of two children, all fit in $O(1)$ words. Thus, we will use standard techniques to compress them into the memory while leaving a proper spill. In sum, almost all encoding steps are the same as \cite{patrascu2008succincter} except that we use adapters to combine two VMs instead of simply concatenating them in order. 

Following the high-level description above, we now begin the proof of \cref{thm:dyn_aBtree}.

\begin{proofof}{\cref{thm:dyn_aBtree}}
  The proof is by induction on $n$, which is a power of two. In each step, we aggregate two spillover representations of subtrees of size $n/2$ into one with size $n$, in the same manner as \cite{patrascu2008succincter}. Formally, for a daB-tree of size $n$ with root label $\phi$, we encode it with $M(n, \phi)$ memory bits and a spill $k \in [K(n, \phi)]$. The memory bits are divided into $\l(n, \phi) \defeq \midfloor{M(n, \phi) / w}$ complete words, which are stored in a VM, and an incomplete word. Fix $r$ as a parameter whose value will be decided later. We will inductively show that:
  \begin{itemize}
  \item $K(n, \phi) \le 2r$.
  \item $M(n, \phi) + \log K(n, \phi) \le \log \treenum[n, \phi] + 6 (2n-1) / r$.
  \end{itemize}
  We then set $r = 12n$ (here $n$ is the parameter of the whole daB-tree and $r$ remains fixed during the induction), thus $\beta w \ge \log 2r + \log |\Phi| + \log |\Sigma|$ holds according to the condition of \cref{thm:dyn_aBtree}. 
  For simplicity of the proof, we first focus on the space usage while introducing our encoding method; at the end of the proof we will analyze the time usage and lookup table size.

  \paragraph{Base case.} When $n = 1$, the array maintained by the daB-tree only has one element $A[1]$. The root label $\phi$, which is determined by $A[1]$, is stored outside, so the only task is to store $A[1]$ conditioned on $\phi$. Let $\Sigma_\phi \subseteq \Sigma$ denote the set of elements leading to the label $\phi$, then we only need to store an index in $\Sigma_\phi$. This is done by the standard technique stated below.

  \begin{lemma}[{\cite[Lemma 3]{patrascu2008succincter}}]
    \label{lm:encode_small_set}
    For an arbitrary set $\mathcal{X}$ and integer $r \le |\mathcal{X}|$, we can represent an element of $\mathcal{X}$ by a spillover encoding with a spill universe $K$, where $r \le K \le 2r$, and the redundancy is at most $\frac{2}{r}$ bits.
  \end{lemma}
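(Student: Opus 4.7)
The plan is to identify $\mathcal{X}$ with $\{0, 1, \ldots, N - 1\}$ where $N \defeq |\mathcal{X}|$, and to encode $x \in \mathcal{X}$ using the standard quotient--remainder split with carefully chosen parameters: set $k \defeq \lfloor x / 2^M \rfloor$ and let $m$ be the $M$-bit binary representation of $x \bmod 2^M$. The inverse is $x = k \cdot 2^M + m$, so this is an injection from $[0, N)$ into $[K] \times \{0,1\}^M$ whenever $K \cdot 2^M \geq N$. Thus the task reduces to picking the integer $M \geq 0$ so that the resulting $K \defeq \lceil N / 2^M \rceil$ lies in $[r, 2r]$, while simultaneously keeping the product $K \cdot 2^M$ as close to $N$ as possible from above.

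First, I would choose $M$ explicitly. If $N \leq 2r$, take $M = 0$ and $K = N$, which lies in $[r, 2r]$ by the assumption $r \leq N$. Otherwise, take $M$ to be the unique nonnegative integer with $N / 2^M \in (r, 2r]$; existence and uniqueness follow because $2^M$ doubles as $M$ increments, so $N / 2^M$ crosses the target interval exactly once. In this case $K = \lceil N / 2^M \rceil \in [r+1, 2r]$, so the spill universe lies in the required range.

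Next, for the redundancy bound I would combine the identity $K \cdot 2^M = \lceil N / 2^M \rceil \cdot 2^M < N + 2^M$ with $N / 2^M \geq r$ (equivalently $2^M / N \leq 1/r$) to obtain $K \cdot 2^M / N < 1 + 1/r$. Hence
\[
    \log (K \cdot 2^M) - \log N \;\leq\; \log (1 + 1/r) \;\leq\; (\log e)/r \;<\; 2/r,
\]
which is exactly the advertised redundancy. The entire construction is elementary and arithmetic, so no single step poses a serious obstacle; the only care needed is the edge case $N \leq 2r$, which is immediate from the hypothesis $r \leq N$.
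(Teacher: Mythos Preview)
Your proposal is correct. Note that the present paper does not itself prove this lemma; it is quoted verbatim from \cite{patrascu2008succincter}, and your quotient--remainder split with $M$ chosen so that $N/2^M \in (r,2r]$ is exactly the standard argument used there.
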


  By applying this lemma on $\Sigma_\phi$, we can encode an element of it using $\log \treenum[1, \phi] + 2/r$ bits (note that $\treenum[1, \phi] = |\Sigma_\phi|$), where the spill universe is $K(1, \phi) \le 2r$.\footnote{Note that the condition of \cref{lm:encode_small_set}, $r\le |\mathcal{X}|$, may not be satisfied. In this case, we just use spill universe $|\mathcal{X}|$ with 0 memory bits, without any extra encoding.} The induction statement holds for $n=1$.

  For the $M(1, \phi)$ memory bits produced by \cref{lm:encode_small_set}, we cut off the leftmost $\l(1, \phi) \cdot w$ bits, storing them as complete words in the VM, and leave the remaining bits as the incomplete word. The VM will only contain $O(1)$ words (and possibly 0 words) as $w = \Omega(\log|\Sigma|)$.

  Moreover, the encoding and decoding procedures can be implemented efficiently with lookup tables of size $O(|\Sigma| + |\Phi|)$ words -- we only need to store the encoding and decoding mappings, which occupy $O(|\Sigma_\phi|)$ words; taking summation over $\phi$ gives the total space $O(|\Sigma| + |\Phi|)$, as the sets $\midBK{\Sigma_\phi}_{\phi \in \Phi}$ form a partition of the alphabet $\Sigma$.

  When an update is made to $A[1]$, $\phi$ may change. We simply redo the encoding procedure above and rewrite all stored information, causing $O(1)$ word accesses to the VM. It is possible that the new label $\phi$ changes the required number of complete words in the VM, in which case we make $O(1)$ allocations or releases to the VM.

  \paragraph{Induction steps.}

  Assume the induction hypothesis holds for $n/2$, and we are going to prove it for $n$.
  The array $A[1 \ldots n]$ is divided into two parts, each consisting of $n/2$ elements and maintained by a smaller daB-tree. These two smaller daB-trees are used as subtrees of the root node.

  The root label $\phi$ is determined by the labels of its children, denoted by $\phi_1$ and $\phi_2$. According to our induction hypothesis, each of the two subtrees can be encoded within $M(n/2, \phi_i)$ memory bits and a spill in universe $[K(n/2, \phi_i)]$, respectively for $i = 1, 2$. The representation of two subtrees, with their complete words storing in two small VMs (sub-VMs), form the starting point of our encoding procedure. Below, we introduce the encoding procedure step-by-step.

  \begin{figure}[t]
    \centering
    \includegraphics[width = 0.75 \textwidth]{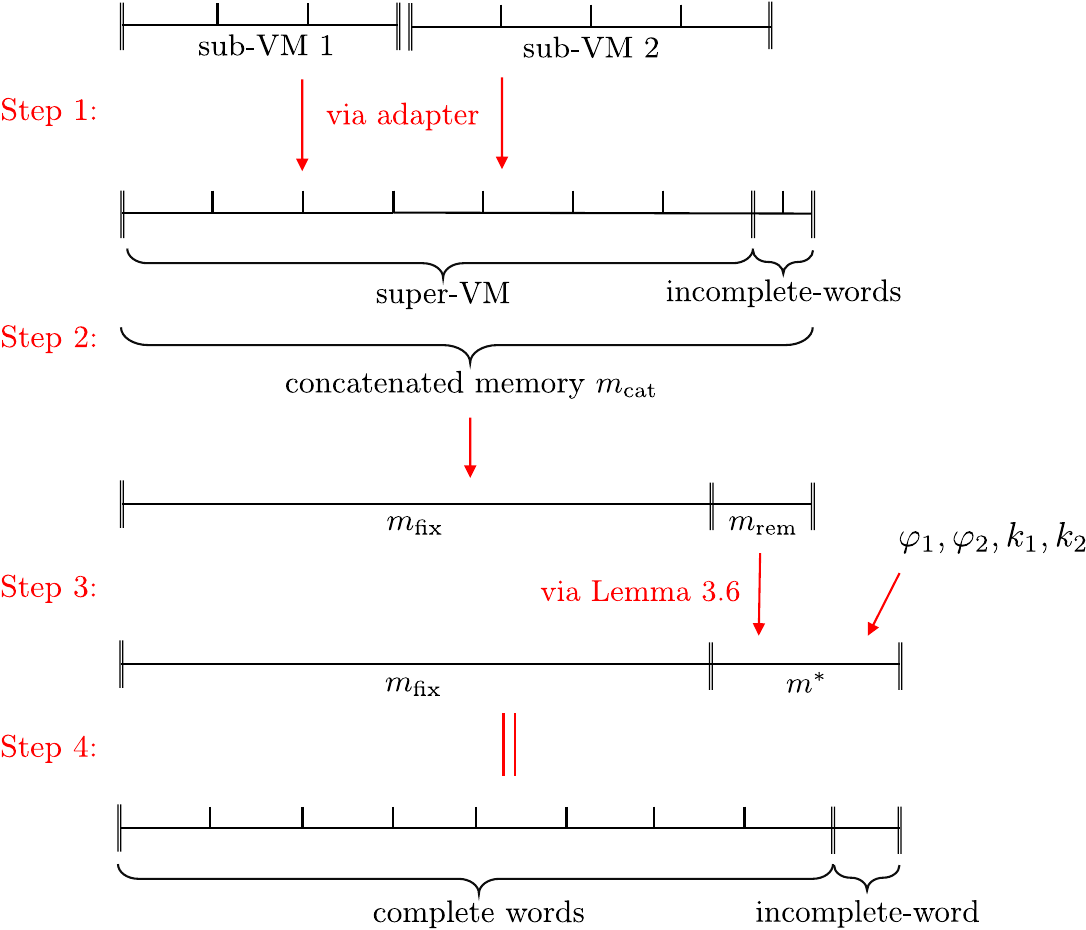}
    \caption{Encoding procedure of 4 induction steps. Step 1 concatenates two sub-VMs using an adapter; Step 2 cuts the concatenated memory into two parts; Step 3 compresses the right part $\mmright$ with labels and spills; Step 4 divides the memory into words and form the final encoding.}
  \end{figure}

  \paragraph{Step 1: Adapter.} As the complete words from the two subtrees are stored in two separate sub-VMs, each containing $\l_1 \defeq \l(n/2, \phi_1)$ and $\l_2 \defeq \l(n/2, \phi_2)$ words, we now use an adapter (\cref{lem:two_way_adapter_final}) to combine them, forming a super-VM of $\l_1 + \l_2$ words. The adapter incurs no redundancy and only requires free access to the lengths of sub-VMs $(\l_1, \l_2)$; we always extract $\l_1, \l_2$ as we recurse to a child, ensuring this requirement is met.

  We view the super-VM as a bit string of length $(\l_1 + \l_2) \cdot w$, and concatenate it with the incomplete words from the two subtrees \emph{at the end}. The outcome is a string of $\Mcat \defeq M(n/2, \phi_1) + M(n/2, \phi_2)$ bits, called the \emph{concatenated memory}, which we denote by $\mcat$. Its length $\Mcat$ depends on the children's labels $\phi_1, \phi_2$, i.e., $\Mcat = \Mcat(\phi_1, \phi_2)$.

  \paragraph{Step 2: Cut the memory.} We cut the concatenated memory $\mcat$ into two parts $\mmleft$ and $\mmright$, such that the first part has $\Mleft$ bits which only depends on the root label $\phi$ but not $\phi_1, \phi_2$; the second part has at most $O(w)$ bits. Formally, we define
  \[
    \Mmax \defeq \max_{\phi'_1, \phi'_2 \,:\, \labfunc(\phi'_1, \phi'_2, n) = \phi} \bk{M(n/2, \phi'_1) + M(n/2, \phi'_2)},
    \qquad
    \Mleft \defeq \max\bk{\Mmax - 4\beta w, \, 0},
  \]
  where recall that $\beta$ is a constant integer satisfying $\beta w \ge \log 2r + \log |\Phi| + \log |\Sigma|$.
  We divide $\mcat$ into the leftmost $\Mleft$ bits and the remaining $\Mright \defeq \Mcat - \Mleft$ bits. It is possible that $\Mcat < \Mleft$ for the current labels $\phi_1$, $\phi_2$, in which case $\mmleft$ is formed by padding zeros to the end of $\mcat$ until it has $\Mleft$ bits; $\mmright$ is left empty. In all cases, the second part contains at most $4\beta w = O(w)$ bits.

  After cutting the memory into two parts, $\mmleft$ directly appears as the leftmost bits in our final encoding, while $\mmright$ is further compressed with other information in the next step.

  \paragraph{Step 3: Compress the labels and spills.} Next, we compress the children's labels $\phi_1, \phi_2$, their spills $k_1, k_2$, and the remaining part $\mmright$ from the last step together, using the following lemma from~\cite{patrascu2008succincter}.

  \begin{lemma}[{\cite[Lemma 5]{patrascu2008succincter}}]
    \label{lm:encode_spillover}
    Assume we need to represent a variable $x \in \mathcal{X}$, and a pair $(\ym, \yk) \in \BK{0,1}^{M(x)} \times [K(x)]$. Let $p(x)$ be a probability density function on $\mathcal{X}$, and $K(\cdot)$, $M(\cdot)$ be non-negative functions on $\mathcal{X}$ satisfying:
    \begin{align*}
      \label{eq:pat_lm5_condition}
      \forall x \in \mathcal{X} :\quad \log \frac{1}{p(x)} + M(x) + \log K(x) \le H. \numberthis
    \end{align*}
    We further assume the word size $w = \Omega(\log |\mathcal{X}| + \log r + \log \max K(x))$, then we can design a spillover representation for $x$, $\ym$, and $\yk$, denoted by $(m^*, k^*)\in \{0,1\}^{M^*}\times [K^*]$, with the following parameters:
    \begin{itemize}
    \item The spill universe is $K^*$ with $K^* \le 2r$; the memory usage is $M^*$ bits.
    \item The redundancy is at most $4/r$ bits, i.e., $M^* + \log K^* \le H + 4/r$.
    \item Given a precomputed table of $O(|\X| \cdot r \cdot \max K(x))$ words that only depends on the input functions $K, M$, and $p$, and assuming $H \le O(w)$, both decoding $(x, \ym, \yk)$ from $(m^*, k^*)$ and encoding $(x, \ym, \yk)$ to $(m^*, k^*)$ takes $O(1)$ time on a word RAM. The table can be precomputed in linear time.
    \end{itemize}
  \end{lemma}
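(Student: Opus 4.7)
I would follow the standard arithmetic-coding plus spillover construction from \cite{patrascu2008succincter}. Place the values $x \in \X$ in a canonical order and give each $x$ a block in $[0,1)$ of length $p(x)$, starting at $P(x) \defeq \sum_{x'<x} p(x')$. Inside the $x$-block, carve out $2^{M(x)} K(x)$ equal sub-intervals, one per pair $(\ym, \yk)$; by \eqref{eq:pat_lm5_condition}, each such sub-interval has length at least $2^{-H}$.

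Next I would pick $(M^*, K^*)$. Let $N^* \defeq K^* \cdot 2^{M^*}$ be the smallest integer of this form, with $K^* \in [r, 2r]$, satisfying $N^* \ge 2^H$. A direct calculation shows such parameters exist and give $\log N^* \le H + 4/r$, matching the claimed redundancy.

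To encode, scale the unit interval up to $[0, N^*)$. Every sub-interval now has length at least $1$, so the rounded-up left endpoint $Y \defeq \lceil N^* \cdot a_{x,\ym,\yk} \rceil$ is an integer in $[0, N^*)$, and these $Y$'s are pairwise distinct since consecutive scaled left endpoints differ by at least $1$. Decompose $Y = m^* K^* + k^*$ uniquely with $k^* \in [K^*]$ and $m^* \in [2^{M^*}]$ to obtain the spillover representation. Both encoding and decoding reduce to a few word-RAM arithmetic steps on $P(x), p(x), M(x), K(x)$ plus a lookup to invert $Y \mapsto x$; since $H \le O(w)$, all intermediate values fit in $O(1)$ words.

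The main obstacle is keeping the $Y \mapsto x$ inversion table within the target $O(|\X| \cdot r \cdot \max K(x))$ words, because a flat table indexed by $Y$ has size $\Theta(N^*) = \Theta(2^H)$, which can exceed this bound. My plan is to store, for each $x \in \X$, a table of size $O(r \cdot K(x))$ keyed by $(k^*, \yk)$ that records how the $\yk$-stripe of the $x$-block meets the residue class $k^* \bmod K^*$, together with a compact directory locating the candidate $x$ from the top bits of $Y$. Verifying the total table size and supporting $O(1)$-time access given this layout is the principal bookkeeping step remaining.
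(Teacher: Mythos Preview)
This lemma is not proved in the present paper; it is quoted from \cite{patrascu2008succincter} and used as a black box. So there is no proof here to compare against, and your task is really to reconstruct P\v{a}tra\c{s}cu's argument. Your arithmetic-coding-plus-spillover outline is exactly that argument: partition $[0,1)$ by the weights $p(x)$, subdivide each block into $2^{M(x)}K(x)$ equal pieces, scale by $N^* = K^*2^{M^*}$ with $K^* \in [r,2r)$, and take a rounded endpoint. The injectivity and the $4/r$ redundancy bound are correct as you describe.

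You have also put your finger on the one nontrivial point: the decoding table. Your proposed fix (per-$x$ tables keyed by $(k^*, y_K)$ plus a ``top-bits'' directory) is in the right spirit but, as stated, cannot work for arbitrary $p$: if some $p(x)$ is extremely small, its block may occupy far less than one top-bit bucket, and many such $x$'s can collide in a single bucket, defeating any $O(|\X| \cdot r \cdot \max K)$-size directory. The missing ingredient is the lower bound on $p$. In this paper the lemma is \emph{only ever invoked after} the perturbation of \cref{clm:perturb}, which guarantees $p'(x) \ge 1/(2r|\X|)$. Consequently each $(x,y_K)$-subblock has scaled length at least $N^*/(2r|\X|\max K(x))$, so a single flat array of $O(r|\X|\max K(x))$ buckets, each recording which $(x,y_K)$-block covers it, resolves $(x,y_K)$ from $Y$ in $O(1)$ time; $y_M$ is then the offset within that subblock. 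This is precisely the claimed table size. Rather than trying to engineer a clever layout for general $p$, you should simply note that the lemma (as used) carries the implicit hypothesis $p(x) \ge \Omega(1/(r|\X|))$ and use uniform bucketing.
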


  Let $\kcat \in [K(n/2, \phi_1) \cdot K(n/2, \phi_2)]$ be the combination of the children's spills $k_1$ and $k_2$. We are going to apply the above lemma on $\X = \midBK{(\phi_1, \phi_2) : \labfunc(\phi_1, \phi_2, n) = \phi}$ and $(\ym, \yk) = (\mmright, \kcat)$.

  To construct the probability distribution, we first define
  \[
    p(\phi_1, \phi_2) \defeq \frac{\treenum[n/2, \phi_1] \cdot \treenum[n/2, \phi_2]}{\treenum[n, \phi]},
  \]
  that is, the induced marginal distribution on $(\phi_1, \phi_2)$ if we pick an instance $A[1..n]$ with root label $\phi$ uniformly at random. Same as \cite{patrascu2008succincter}, we slightly perturb the distribution for stronger properties.

  \begin{claim}[\cite{patrascu2008succincter}]
    \label{clm:perturb}
    For any probability distribution $p(\cdot)$ over set $\X$ and any parameter $r > 0$, we can perturb $p(\cdot)$ to another probability distribution $p'(\cdot)$, such that for any $x \in \X$,
    \begin{itemize}
      \item $p'(x) \ge \frac{1}{2r|\X|}$.
      \item $\log \frac{1}{p'(x)} \le \log \frac{1}{p(x)} + \frac{2}{r}$.
    \end{itemize}
  \end{claim}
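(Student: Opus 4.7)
The plan is to perturb $p$ by taking a convex combination with the uniform distribution over $\X$. Concretely, for a parameter $\alpha \in (0,1)$ to be chosen, define
\[
    p'(x) \defeq (1 - \alpha)\, p(x) + \frac{\alpha}{|\X|}.
\]
This is automatically a probability distribution, and for every $x \in \X$ it enjoys two useful pointwise lower bounds simultaneously: $p'(x) \ge \alpha/|\X|$ (from the uniform component) and $p'(x) \ge (1-\alpha)\, p(x)$ (from the original component). The entire argument reduces to choosing $\alpha$ so that both bounds are strong enough.

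The first lower bound yields the first claimed property as soon as $\alpha \ge 1/(2r)$. For the second, taking logarithms of $p'(x) \ge (1-\alpha)\, p(x)$ gives
\[
    \log\bigbk{1/p'(x)} \le \log\bigbk{1/p(x)} + \log\bigbk{1/(1-\alpha)},
\]
so it suffices to arrange $\log\bigbk{1/(1-\alpha)} \le 2/r$, equivalently $\alpha \le 1 - 2^{-2/r}$.

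Thus the proof reduces to picking any $\alpha$ in the interval $\bigBk{1/(2r),\, 1 - 2^{-2/r}}$ and verifying this interval is nonempty in the relevant regime of $r$. Using the elementary inequality $1 - 2^{-t} \ge t\ln 2 - (t\ln 2)^2/2$ at $t = 2/r$ gives $1 - 2^{-2/r} \ge (2\ln 2)/r - O(1/r^2)$, which comfortably dominates $1/(2r)$ for all $r \ge 1$. Since the applications use $r = 12n$, this asymptotic check is more than sufficient.

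The only obstacle worth naming is checking this one-variable numerical inequality (and handling very small $r$ if one wants the cleanest possible statement); the mixture construction itself and the two pointwise bounds on $p'$ are immediate from the definition, so once $\alpha$ is fixed the two claimed properties follow with no further work.
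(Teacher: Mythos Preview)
Your proof is correct and is exactly the standard argument: the paper does not give its own proof of this claim but cites \patrascu{}~\cite{patrascu2008succincter}, where the perturbation is precisely the mixture $p'(x)=(1-\alpha)p(x)+\alpha/|\X|$ with $\alpha$ on the order of $1/r$. Your observation that the statement as written only holds once $r$ is bounded away from zero (and that the paper uses $r=12n$) is also accurate; nothing more is needed.
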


  We then apply \cref{lm:encode_spillover} with the perturbed distribution $p'$ and
  \[H \defeq \log \treenum[n, \phi] - \Mleft + \frac{12n - 10}{r}.\]
  We check the condition \eqref{eq:pat_lm5_condition} by discussing the following two cases:
  \begin{itemize}
  \item Suppose $\Mcat \ge \Mleft$, i.e., we did not pad zeros to $\Mcat$. In this case, $|\mmright| = \Mcat - \Mleft$. The induction hypothesis implies that
    \[
      M(n/2, \phi_i) + \log K(n/2, \phi_i) \le \log \treenum[n/2, \phi_i] + \frac{6(n-1)}{r} \qquad (i = 1, 2).
    \]
    Therefore, for any $(\phi_1, \phi_2) \in \X$, the left-hand side of \eqref{eq:pat_lm5_condition} is
    \begin{align*}
      & \phantom{{}\le{}} \log \frac{1}{p'(\phi_1, \phi_2)} + \bigbk{\Mcat(\phi_1, \phi_2) - \Mleft} + \log \bigbk{K(n/2, \phi_1) \cdot K(n/2, \phi_2)} \\
      & \le
        \bk{\log \frac{1}{p(\phi_1, \phi_2)} + \frac{2}{r}} + M(n/2, \phi_1) + M(n/2, \phi_2) - \Mleft + \log K(n/2, \phi_1) + \log K(n/2, \phi_2) \\
      & \le
        \log \frac{\treenum[n, \phi]}{\treenum[n/2, \phi_1] \cdot \treenum[n/2, \phi_2]} + 
        \bigbk{M(n/2, \phi_1) + \log K(n/2, \phi_1)} \\
      & \phantom{{} \le \log \frac{\treenum[n, \phi]}{\treenum[n/2, \phi_1] \cdot \treenum[n/2, \phi_2]}} + \bigbk{M(n/2, \phi_2) + \log K(n/2, \phi_2)} + \frac{2}{r} - \Mleft \\
      & \le
        \log \treenum[n, \phi] + 2 \cdot \frac{6(n-1)}{r} + \frac{2}{r} - \Mleft \\
      & = H.
    \end{align*}
  \item Suppose $\Mcat < \Mleft$. In this case, $\mmright$ is empty, so the left-hand side of \eqref{eq:pat_lm5_condition} equals
    \begin{align*}
      & \log \frac{1}{p'(\phi_1, \phi_2)} + \log \bigbk{K(n/2, \phi_1) \cdot K(n/2, \phi_2)} \\
      \le{} \, &\log (2r |\X|) + \log (2r)^2 \,\le\, 3 \log (2r) + 2 \log \midabs{\Phi} \,\le\, 4 \beta w,
    \end{align*}
    where the first inequality is due to \cref{clm:perturb} and the induction hypothesis $K(n/2, \phi_i) \le 2r$; the last inequality holds as $\beta w\ge \log 2r + \log |\Phi| + \log |\Sigma|$. On the other side, the right-hand side of \eqref{eq:pat_lm5_condition} is
    \begin{align*}
      H \; &\ge \max_{(\phi'_1, \phi'_2) \in \X} \log \bigbk{\treenum[n/2, \phi_1'] \cdot \treenum[n/2, \phi_2']} - \Mfix + \frac{12 n - 10}{r} \\
        &\ge \max_{(\phi'_1, \phi'_2) \in \X} \bk{\Mcat(\phi'_1, \phi'_2) - 2 \cdot \frac{6(n - 1)}{r}} - \Mfix + \frac{12 n - 10}{r} \\
        &= \; \Mmax - \Mfix + \frac{2}{r} \;>\; 4\beta w \;\ge\; \textup{left-hand side}.
    \end{align*}
    Here the third inequality is because $\Mfix = \max\midbk{\Mmax - 4\beta w, \, 0}$ and the condition $\Mfix > \Mcat \ge 0$. Therefore \eqref{eq:pat_lm5_condition} also hold in this case.
  \end{itemize}
  Applying \cref{lm:encode_spillover} gives us a spillover representation $(m^*, k^*) \in \midBK{0, 1}^{M^*} \times [K^*]$ of $(\phi_1, \phi_2, k_1, k_2, \mmright)$ conditioning on $\phi$, where $K^* \le 2r$ and
  \[
    M^* + \log K^* \le H + \frac{4}{r} = \log \treenum[n, \phi] + \frac{12 n - 6}{r} - \Mleft.
  \]
  With the help of proper lookup tables, both the encoding and decoding procedures can be completed within constant time.

  \paragraph{Step 4: Concatenate.} The last step involves concatenating $\mmleft$ with $m^*$, the outcome memory bits obtained from the previous step, to form a bit string of length $\Mleft + M^* \eqdef M(n, \phi)$. This memory string, combined with the spill $k^* \in [K^*] = [K(n, \phi)]$, form the encoding for the daB-tree of size $n$. The induction statement holds for $n$, since $K^*\leq 2r$ and
  \[
    M(n, \phi) + \log K(n, \phi) = \Mleft + M^* + \log K^* \le \log \treenum[n, \phi] + \frac{12 n - 6}{r}.
  \]
  Finally, the $M(n, \phi)$ memory bits are again divided into $\floor{M(n, \phi) / w}$ complete words and an incomplete word, while the former ones are stored in a VM.

  \begin{remark}
    Observe that the primary part, $\mmleft$, which is coming from the adapter, is ``aligned'' with the VM. This means that a VM word from the children is still stored as a word in the new VM. This alignment benefits our query and update algorithms because each word-access from a child will translate into only one word-access of the root's VM.
  \end{remark}

  After proving the induction statement for all $n' \le n$ that are powers of two, we pick $r \defeq 12n$ and directly encode the final spill $k \in [K(n, \phi)]$ into memory, incurring a 1-bit redundancy due to rounding. It gives the desired space usage in \cref{thm:dyn_aBtree}:
  \[
    M(n, \phi) + \log K(n, \phi) + 1 \le \log \treenum[n, \phi] + 6(2n-1) / r + 1 < \log \treenum[n, \phi] + 2.
  \]

  \paragraph{Query algorithm and nested adapters.}
  Now we show how to simulate the query algorithm on this succinct representation, in a similar way to~\cite{patrascu2008succincter}.
  The query algorithm starts from the root and walks down to a leaf. 
  Before we visit any node $u$, we already know $u$'s incomplete word, its spill $k = k^*$, and its label $\phi$. We then recover $m^*$, which is stored in $u$'s incomplete word and the rightmost $O(1)$ words of $u$'s VM. According to \cref{lm:encode_spillover}, we can decode $(k^*, m^*)$ to recover the two children's labels $\phi_1, \phi_2$, spills $k_1, k_2$, and the rightmost bits of the concatenated memory, $\mmright$. Consequently, we recover $\l_1, \l_2$, the number of complete words in both children, for the use of adapters. This decoding procedure can be completed within constant time.

  Since we have recovered the labels $\phi_1, \phi_2$ of the children, the query algorithm decides which child to recurse into. Before visiting that child, we first need to recover its incomplete word, which may belong to $\mmleft$, $\mmright$, or both. Since $\mmright$ has been recovered, we only need to make $O(1)$ accesses to $\mmleft$, which is directly stored in $u$'s VM. 
  Then the query algorithm proceeds until it reaches a leaf.

  Note that probing a word in $u$'s VM is not directly allowed for non-root node $u$. The VM of $u$ is connected to its parent's VM via an adapter, and which is further connected to its grandparent via an adapter, and so on. Each time we want to access a word in $u$'s VM, it first translates to an access request to the parent, then to the grandparent, until the root is reached.

  Formally, we use the following subroutine (\cref{alg:probe_word}) to access a word in the VM of an arbitrary node. We denote by $(v, i)$ the $i$-th word in the VM of node $v$ (i.e., the VM that stores the complete words of the representation of the subtree rooted at $v$).

  \begin{algorithm}[!h]
    \caption{Accessing Virtual Memory Words}
    \label{alg:probe_word}
    \DontPrintSemicolon

    \SetKwFunction{fprobe}{Access}
    \SetKwProg{Fn}{Function}{:}{}

    \Fn(\Comment{Access the $i$-th word in $v$'s VM}){\fprobe{$v$, $i$}} {
      \If{$v$ is the root} {
        Directly access the $i$-th word\;
        \Return
      }
      $u \gets v$'s parent\;
      $j \gets$ the translated address of $(v, i)$ in $u$'s super-VM (and also the \emph{concatenated memory})\hspace{-3em}\; \label{line:translate}
      \If{any part of word $j$ belongs to $\mmleft$ of $u$} {
        \fprobe{$u, j$}\;
      }
      \If{any part of word $j$ belongs to $\mmright$ of $u$} {
        Access the corresponding bits in $\mmright$ which was decoded according to \cref{lm:encode_spillover}\;
        \label{line:decode}
      }
    }
  \end{algorithm}

  Here are a few points to mention. In \cref{line:translate}, the translated address $j$ is an address over the super-VM in Step 1 (a prefix of the \emph{concatenated memory}) rather than the VM of $u$. Its major part $\mmleft$ is directly stored in the VM of $u$ while the remaining few bits $\mmright$ are compressed again in Step 3. \cref{line:translate} is done by querying the lookup table of the adapter at $u$ once, which also depends on $\l_1, \l_2$, the lengths of VMs of $v$ and its sibling.

  When we read a word $(v, i)$, \cref{line:decode} does not produce extra word-accesses, because we have already gained the knowledge of $\mmright$ when we visit $u$ (we always need to visit the parent $u$ before we can visit $v$). On the other hand, when we write to some word $(v, i)$ (not required for queries, but required for updates), it might seem that \cref{line:decode} writes multiple words into $u$'s VM; what we do here is to postpone the writing to $\mmright$ until the end of the entire update operation on the daB-tree, at which point we will update all $\mmright$ lying on the path we visited. Thus, reading or writing a word in any node's VM takes $O(\log n)$ time.

  Since the query algorithm initiates $O(1)$ word accesses at every node it visits, the total running time for the query algorithm is $O(\log^2 n)$.

  \paragraph{Update algorithm.} In our update algorithm, we first follow the same procedure as the query algorithm, walking from the root down to the leaf $A[i]$ that needs to update, recovering all labels, spills, and incomplete words of $A[i]$'s ancestors and their siblings.

  As we modify $A[i]$, all its ancestors may need to change their labels. If for some node $u$, the number of complete words $\l(n_u, \phi_u)$ changes after the update, then we request allocations or releases to the adapter connecting $u$ and its parent. Since we have assumed that the change of $\l(n_u, \phi_u)$ is $O(1)$ for every node $u$, the total number of allocations and releases is bounded by $O(\log n)$.

  When we allocate or release a word in a node $u$'s VM, the adapter requires us to change the address mapping for $O(\log (\l_1 + \l_2)) = O(\log n)$ words, which leads to $O(\log n)$ word-accesses in $u$'s parent's VM.
  Each of these word-accesses takes $O(\log n)$ time.
  Therefore, initiating an allocation or a release on any node's VM takes $O(\log^2 n)$ time.
  Thus, it takes $O(\log^3 n)$ time to handle all allocations and releases initiated by an update.
  Note that when we allocate a word in $u$'s VM, the adapter also requires the super-VM of $u$'s parent to allocate a word, which may or may not result in a change of the final number of complete words in $u$'s parent.

  Finally, for all nodes $u$ on the path from bottom to top, we redo all encoding steps introduced above, and update all changed words. They may include:
  \begin{itemize}
  \item During step 1, if $\l_1$ or $\l_2$ changes by $O(1)$, we need to adjust according to the address mapping by moving $O(\log n)$ words, which results in no more than $O(\log n)$ word-accesses at every level.
  \item Suppose the label $\phi_u$ of node $u$ is changed. This may cause $\Mfix$ to change since it depends on $\phi_u$. $\Mfix$ will change by at most $O(w)$ since it only differs from $M(n_u, \phi_u)$ by $O(w)$ while the latter only changes by $O(w)$ during an update. The change of $\Mfix$ requires us to update the rightmost words in $\mmleft$, incurring no more than $O(1)$ word-accesses at each level.
  \item Step 3 involves compressing $O(1)$ words of information into a spillover representation. Its resulting memory bits are directly stored in the rightmost words of $u$'s VM. We initiate $O(1)$ word-accesses to rewrite all of them.
  \end{itemize}
  There are $O(\log n)$ initiated word-accesses at each of the $O(\log n)$ levels, so the total time complexity of the update algorithm is $O(\log^3 n)$.

  \paragraph{Lookup tables.} Our design above involves the following lookup tables:
  \begin{itemize}
  \item Tables for adapters. The number of complete words of the super-VM is bounded by $O(n)$ because $\log \treenum[n, \phi] \le n \log \abs{\Phi} = O(nw)$. That is, the maximum number of words in an adapter is $O(n)$. According to \cref{lem:two_way_adapter_final}, the lookup table consists of $O(n^3)$ words and can be computed in linear time. All adapters in our design can share the same lookup table. 
  \item Tables for the base case $n = 1$. It occupies only $O(|\Phi| + |\Sigma|)$ words and can be precomputed in linear time.
  \item Tables for efficient encoding and decoding in \cref{lm:encode_spillover}. For every label $\phi$ and every $n'\leq n$ that is a power of two, we need a table occupying $O(|\X| \cdot r \cdot \max K(x))$ words.
  Here, $\X$ is the set of pairs $(\phi_1, \phi_2)$ mapping to the parent's label $\phi$, i.e., $\X \defeq \midBK{(\phi_1, \phi_2) : \labfunc(\phi_1, \phi_2, n') = \phi}$. For a fixed $u$ and different $\phi$, all the sets $\midBK{\X_\phi}_{\phi \in \Phi}$ form a partition of $\Phi^2$, so their total size is $O(|\Phi|^2)$; we have $r = O(n)$ and $\max K(x) = \max_{\phi_1, \phi_2} K(n/2, \phi_1) \cdot K(n/2, \phi_2) = O(r^2)$; moreover, there are $O(\log n)$ different $n'$. Multiplying all factors together, we know that lookup tables of this type occupy $O(|\Phi|^2 \cdot n^3\log n)$ words in total.
  \item Tables storing $K(n', \phi)$, $M(n', \phi)$, and $\Mfix = \Mfix(n', \phi)$. They occupy $O(|\Phi| \cdot \log n)$ words and can be precomputed in $O(|\Phi|^2 \cdot \log n)$ time.
  \end{itemize}
  Adding them together, we see that the requirements of lookup table in \cref{thm:dyn_aBtree} are met. Then we conclude our proof.
\end{proofof}

\section{Removing the Assumption}
\label{sec:remove_assumption}

We have introduced an approach to succinctly encode daB-trees (\cref{thm:dyn_aBtree}) that works only when $\log \treenum[n_u, \phi_u]$ does not change by more than $O(1)$ words while we modify the label $\phi_u$ of some node $u$. Although this holds for many natural applications including \textsc{Rank/Select}, it is not always the case. Consider the following example: Assume each leaf element is an integer in $[1, 100]$, and we define $\phi_u$ to be the maximum integer within $u$'s subtree. Starting with array $A = (1, 1, \ldots, 1)$ with root label $\phi = 1$, updating an arbitrary element to $100$ will change the root label to $\phi = 100$. In this example, $\log \treenum[n, \phi]$ changes dramatically from $0$ to $O(n)$. In the statement of \cref{thm:dyn_aBtree}, our encoding of the daB-tree is stored in a VM of roughly $\log \treenum[n, \phi]$ bits, so there is no way to avoid $O(n)$ allocations during the operation above, as long as our coding length for the daB-tree only depends on $n$ and the root label $\phi$.

In the proof of \cref{thm:dyn_aBtree}, this unbounded time usage in our encoding algorithm is due to an unusual fact: In the second step, we pad zeros to the end of $\mcat$ if its length is less than $\Mfix$. This padding leads to a situation where $M(n, \phi)$ is much larger than $M(n/2, \phi_1) + M(n/2, \phi_2)$ for certain pairs $(\phi_1, \phi_2)$. When the node's label $\phi$ changes, $\Mfix$ might also change significantly since it depends on $\phi$. As a result, we need to make a dramatic adjustment to the number of padded zeros, leading to an unacceptably large number of allocations in the VM of that node.

Fortunately, we can avoid it via the following approach. Assume we were going to pad many zeros, say at least $(6 \beta w + 1) \log n$ bits (recall that $\beta$ is a constant satisfying $\beta w \ge \log n + \log |\Phi| + \log |\Sigma| + 100$). Equivalently, $\Mcat \le \Mfix - (6 \beta w + 1) \log n$. Instead of padding zeros, we use a different encoding in this node: Besides the concatenated memory $\mcat$ of two children, we directly encode children's labels $\phi_1, \phi_2$, their spills $k_1, k_2$, and other metadata of the children (explained later) within $6 \beta w$ bits, and then attach it to the end of the concatenated memory. There is no spill in this encoding. We call this alternative encoding the \emph{relaxed encoding scheme}, and call the original encoding the \emph{succinct encoding scheme}.

Unlike the succinct encoding scheme, when some node $u$ is adopting the relaxed scheme, its memory length does not only depend on the subtree size $\curn$ and label $\phi_u$. Denote by $\planM$ the number of memory bits in its representation. It should be maintained outside the subtree of $u$, and should be recovered before we can do any operation on $u$.

When any node $u$ is adopting the relaxed scheme, we enforce all its ancestors to also adopt the relaxed scheme. In the whole daB-tree, the nodes adopting the succinct scheme form several disjoint subtrees, for which we can inherit \cref{thm:dyn_aBtree}'s induction statement:
\[
  M(\curn, \phi_u) + \log K(\curn, \phi_u) \le \log \treenum[\curn, \phi_u] + \frac{6(2\curn - 1)}{r},
\]
where $r \defeq 12n$ is a fixed parameter. For any other node $u$ that adopts the relaxed scheme, we need to additionally store which encoding scheme each of its children is using, and if any child is using the relaxed one, node $u$ also needs to store the memory size of its children. These additional information are stored within the rightmost $6 \beta w$ bits, together with the children's labels and spills.

Following the discussion above, we now prove the following variant of \cref{thm:dyn_aBtree}, removing the assumption that $\log \treenum[\curn, \phi_u]$ must only change by $O(w)$:
\begin{theorem}
  \label{clm:dyn_aBtree}
  Suppose $\beta$ is a constant integer such that $\beta w \ge \log n + \log |\Phi| + \log |\Sigma| + 100$. We can maintain a daB-tree of $n$ elements, such that:
  \begin{itemize}
  \item The daB-tree is stored within $M \le \log \treenum[n, \phi] + 3$ bits under the virtual memory model, assuming free access to the root label $\phi$ and the memory size $M$. In particular, $M$ is stored outside, and the daB-tree can update $M$ within constant time.
  \item Each query takes $O(\log^2 n)$ time, and each update takes $O(\log^5 n)$ time.
  \end{itemize}
  Lookup tables of $O(|\Sigma| + |\Phi|^2 n^3 \log n)$ words are precomputed to support the above operations. These tables can be precomputed in time linear in their total size. 
\end{theorem}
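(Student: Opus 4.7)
The plan is to formalize the two-scheme construction sketched before the theorem. I would define the \emph{succinct scheme} as the encoding from \cref{thm:dyn_aBtree}, and the \emph{relaxed scheme}, under which a node $u$ stores the concatenated memory $\mcat$ of its children followed by a $6\beta w$-bit metadata block packing $\phi_1,\phi_2,k_1,k_2$, a scheme flag for each child, and the memory length $\planM_i$ of any relaxed child; the relaxed scheme carries no spill, and $\planM_u$ is stored inside the parent's metadata block (or outside the daB-tree entirely for the root). I would set the decision rule so that node $u$ uses the relaxed scheme iff either child is relaxed or $\Mcat \le \Mfix - (6\beta w+1)\log n$; otherwise $u$ uses the succinct scheme. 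This rule is upward-closed, so the succinct nodes partition the daB-tree into a forest of maximal subtrees $T_1,\dots,T_s$ hanging off the relaxed frontier.

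For the space bound, I would bound each succinct subtree $T_j$ by the induction of \cref{thm:dyn_aBtree}, giving at most $\log \treenum[n_{T_j},\phi_{T_j}] + 6(2n_{T_j}-1)/r$ bits, and each relaxed node $u$ by $6\beta w$ bits on top of the concatenation of its children. Telescoping the concatenations upward yields a total memory of $\log\treenum[n,\phi]$ plus the accumulated succinct redundancy plus $6\beta w$ times the number of relaxed nodes. The last term is the only new contribution, and I would control it by amortizing against the padding saved at each rule-(ii) trigger: the trigger saves at least $(6\beta w+1)\log n$ bits, which comfortably pays for the $6\beta w \cdot \log n$ overhead of the entire chain of at most $\log n$ relaxed ancestors stacked above it. With $r=12n$ and one extra bit for rounding the final spill, the total redundancy stays within $3$ bits.

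For the time bound, I would trace each update along the leaf-to-root path of length $O(\log n)$. At each node $u$ the algorithm re-encodes as in \cref{thm:dyn_aBtree} and additionally handles possible scheme flips and changes to $\planM$. A scheme flip at $u$ may insert or remove the rule-(ii) padding of up to $\Theta(\log n)$ words together with the $6\beta w$-bit block, so each flip may allocate or release $O(\log n)$ VM words. Each such operation cascades through the chain of nested adapters above $u$, costing $O(\log^2 n)$ by \cref{lem:two_way_adapter_final}, for $O(\log^3 n)$ per flip, $O(\log^4 n)$ over all nodes on the path, plus an additional $O(\log n)$ factor from cascaded updates to $\planM$ along the relaxed ancestors, giving $O(\log^5 n)$ total. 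Queries stay at $O(\log^2 n)$ because both schemes decode the children's labels, spills, and incomplete words in $O(1)$ per node.

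The hard part will be the amortized space argument: I must show the per-node $6\beta w$ overhead, paid along a chain of up to $\log n$ relaxed ancestors, is always covered by the padding savings of some rule-(ii) trigger below it. The threshold $(6\beta w+1)\log n$ is engineered so that a single trigger offsets the entire chain above it, and additional triggers only provide slack. A secondary technicality is maintaining $\planM_u$ externally in a way that remains recoverable during the standard top-down traversal, which I would handle uniformly by storing $\planM_u$ in the parent's metadata block alongside the scheme flags.
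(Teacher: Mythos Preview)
Your construction, decision rule, and storage of $\planM$ in the parent's metadata block are exactly what the paper does, so the approach is the same. Two points where your write-up diverges from a correct argument:

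\textbf{Space.} Your amortization (``a single trigger offsets the entire chain above it'') is morally right but imprecise once multiple triggers share ancestors. The paper replaces this by a single inductive invariant on relaxed nodes,
\[
  \planM_u \;\le\; \log\treenum[n_u,\phi_u] \;-\; (6\beta w+1)\log\frac{n}{n_u},
\]
which is what your amortization is implicitly tracking: the trigger case establishes the invariant with slack $(6\beta w+1)\log(n/2)$, and each step toward the root consumes exactly $6\beta w+1$ of that slack (the $6\beta w$ metadata plus the $+1$ from the sibling's bound). At the root the slack is nonnegative, giving $\planM\le\log\treenum[n,\phi]$, and the final $+3$ accounts for the spill/flag rounding. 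I would recast your telescoping as this induction; otherwise you will need a separate argument for the case where both children of a relaxed node are relaxed.

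\textbf{Time.} Here there is a genuine gap. You attribute the $O(\log n)$ per-node change only to scheme flips, and then reach $O(\log^5 n)$ by an ``additional $O(\log n)$ factor from cascaded updates to $\planM$''. The latter is not where the extra factor comes from: updating $\planM$ in the parent's metadata is $O(1)$ words. The real source is \emph{accumulation}. Even when a node stays succinct, the number of padded zeros can change by $O(\log n)$ words (the threshold bounds the padding to $(6\beta w+1)\log n$ bits), and a scheme flip likewise changes the length by $O(\log n)$ words. Hence the change in the encoding length of the node at height $h$ is the change at height $h-1$ plus $O(\log n)$, so it grows to $O(h\log n)\le O(\log^2 n)$. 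Summing over the path gives $\sum_h O(h\log n)=O(\log^3 n)$ total allocations/releases; each costs $O(\log^2 n)$ via the nested adapters, yielding $O(\log^5 n)$. Your current accounting stops at $O(\log^4 n)$ and patches the last factor with an incorrect justification; replace it with the accumulation argument above.
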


\begin{proof}
  Our encoding strategy is as follows.
  For leaf nodes, we use the same encoding scheme as in \cref{thm:dyn_aBtree}.
  Suppose we are going to encode a subtree of size $\curn$ rooted at $u$ with root label $\phi_u$.  We first encode both subtrees rooted at the children of $u$ recursively. If either of the two children is encoded using the relaxed scheme, then we also choose the relaxed scheme for the root. Otherwise, we compare $\Mcat$ and $\Mfix$: If $\Mcat < \Mfix - (6 \beta w + 1) \log n$, we adopt the relaxed scheme, setting its memory size to $\planM = \Mcat + 6 \beta w$; otherwise, we adopt the succinct scheme.\footnote{Note that we will adopt the relaxed scheme only when there are at least $(6 \beta w + 1) \log n$ padded zeros, but only $6 \beta w$ bits of them are used in $\planM$. It seems a waste of space, but we need to make room for all the ancestors of this node, which are forced to adopt the relaxed scheme.}

  The succinct scheme is exactly the same as in \cref{thm:dyn_aBtree}.
  The relaxed scheme, as discussed above, consists of two parts concatenated together.
  The first part $\mcat$ is the concatenation of memory bits from two children using an adapter as in Step 1 of \cref{thm:dyn_aBtree}. The second part is precisely $6 \beta w$ bits, storing children's spills $k_1, k_2$, labels $\phi_1, \phi_2$, their adopted encoding schemes, and memory sizes if relaxed schemes are used.

  We prove the following statement inductively: For any $\curn \le n$ that is a power of two, and for any root label $\phi_u$, letting $r \defeq 12n$ be a fixed parameter, we can encode the sub-daB-tree rooted at $u$ with size $n_u$ and root label $\phi_u$, such that one of the following holds:
  \begin{itemize}
  \item The root $u$ adopts the succinct scheme. The representation consists of $M(\curn, \phi_u)$ memory bits and a spill in universe $K(\curn, \phi_u)$, where $K(\curn, \phi_u) \le 2r$, and
    \[
      M(\curn, \phi_u) + \log K(\curn, \phi_u) \le \log \treenum[\curn, \phi_u] + \frac{6(2\curn - 1)}{r}.
      \numberthis \label{eq:remove_assumption/hypothesis_1}
    \]
  \item The root $u$ adopts the relaxed scheme. The representation involves $\planM$ memory bits without a spill, where
    \[
      \planM \le \log \treenum[\curn, \phi_u] - (6 \beta w + 1) \log \frac{n}{\curn}.
    \]
  \end{itemize}
  Note that in both cases, the memory size of the root $u$ (it equals $M(\curn, \phi_u)$ or $\planM$) is at most $\log \treenum[\curn, \phi_u] + 1$ bits.

  The base case $\curn = 1$ is just the same as in \cref{thm:dyn_aBtree}, since we always choose the succinct encoding scheme for leaves. Next, we assume the statement holds for $\curn / 2$ and we prove it for $\curn$. We denote by $M_1$, $M_2$ the number of memory bits from two children, and $\phi_1, \phi_2$ the labels of two children. There are three cases to consider.

  \smallskip
  \textbf{Case 1.} At least one child adopts the relaxed scheme. Assume the left child is adopting the relaxed scheme.
  For the right child, whichever scheme it chooses, its memory size $M_2$ is at most $\log \treenum[\curn/2, \phi_2] + 1$ due to the induction hypothesis. We further have
  \begin{align*}
    \planM &= \Mcat + 6 \beta w = M_1 + M_2 + 6 \beta w \\
           &\le \bk{\log \treenum[\curn / 2, \phi_1] - (6 \beta w + 1) \log \frac{n}{\curn / 2}} + \bigbk{\log \treenum[\curn / 2, \phi_2] + 1} + 6 \beta w \\
           &\le \log \treenum[\curn, \phi_u] - (6 \beta w + 1) \log \frac{n}{\curn},
  \end{align*}
  so the statement holds.

  \smallskip
  \textbf{Case 2.} Both children adopt the succinct scheme while the root adopts the relaxed scheme. 
  In this case, the condition $\Mcat < \Mfix - (6 \beta w + 1) \log n$ holds. Furthermore, we have $\Mfix  \le \log \treenum[\curn, \phi_u] + 1$,\footnote{One can show $\Mfix \le \Mmax \le \log \treenum[\curn, \phi_u] + 1$ by picking the children's labels $(\phi_1', \phi_2')$ that lead to the root label $\phi_u$ with the maximum concatenated memory $M'_1 + M'_2 = \Mmax$. Then adding up \eqref{eq:remove_assumption/hypothesis_1} for both children implies $\Mmax \le \log \treenum[\curn/2, \phi'_1] \cdot \treenum[\curn/2, \phi'_2] + 1 \le \log \treenum[\curn, \phi_u] + 1$.} and thus
  \[
    \planM = \Mcat + 6\beta w < \Mfix - (6 \beta w + 1) \log n + 6 \beta w\le \log \treenum[\curn, \phi_u] - (6 \beta w + 1) \log \frac{n}{2},
  \]
  which implies the statement.
  
  \smallskip
  \textbf{Case 3.} Both children and the root adopt the succinct encoding scheme. This case is the same as the proof of \cref{thm:dyn_aBtree}.

  \smallskip
  Combining these three cases, we know that the induction statement holds. 
  By storing one bit indicating which encoding scheme we are using for the root, as well as the spill (if using the succinct scheme), the whole daB-tree is stored in a VM of $\log \treenum[n, \phi] + 3$ bits.

  Similar to \cref{thm:dyn_aBtree}, the query algorithm initiates $O(1)$ word-accesses at every level, so the time complexity is $O(\log^2 n)$.

  It remains to bound the time complexity of the update algorithm. 
  Consider how the number of complete words can change in a node during an update.
  For the leaf node, its size is always $O(1)$ words, so it can change by at most $O(1)$ words.
  For an internal node $u$, the child being updated can change its size, resulting in the same changing amount in the size of $\mcat$, possibly with an additional change of $O(1)$ words. Next, we consider how the change of $|\mcat|$ would lead to a change of the encoding length of $u$.
  \begin{itemize}
  \item If $u$ uses the succinct scheme both before and after the update, then the extra encoding from $\mmright$, labels, and spills contains at most $O(1)$ words; the padded zeros at the end of $\mcat$ include at most $O(\log n)$ words. Hence, it can result in at most $O(\log n)$ words of change.
  \item If $u$ uses the relaxed scheme both before and after the update, the extra encoding is precisely $6\beta w$ bits, resulting in no extra changes.
  \item If the scheme changes during this update, then it may result in an extra change of at most $O(\log n)$ words.
  \end{itemize}
  Taking the changes at the descendants of $u$ into account, the number of complete words in any $u$ can change by at most $O(\log^2 n)$.
  That is, the total number of allocations and releases initiated by the algorithm in all levels is at most $O(\log^3 n)$, which costs $O(\log^5 n)$ time in total (as discussed in the proof of \cref{thm:dyn_aBtree}, processing an allocation or a release causes $O(\log n)$ word-accesses, thus takes at most $O(\log^2 n)$ time).

  Moreover, besides the lookup tables in \cref{thm:dyn_aBtree}, no additional lookup table is needed, since for an internal node adopting the relaxed scheme, the last $6\beta w$ bits in its memory can be decoded (or encoded) by constant arithmetic operations. Hence the lookup table is still $O(|\Sigma| + |\Phi|^2 n^3 \log n)$ words, precomputed in linear time.
\end{proof}

\section{Applications}
\label{sec:application}

Our definition of {dynamic augmented B-tree (daB-tree)} captures a wide class of data structures.
Many problems that require us to maintain an array $A[1..n]$ can be directly solved by a daB-tree.
By further compressing the daB-tree using \cref{clm:dyn_aBtree}, this gives us an efficient succinct data structure.
Below, we sketch several such applications.

\subsection{Sparse Dynamic Rank/Select (Fully Indexable Dictionary)}

Assume we want to maintain an array $A[1..n] \in \midBK{0, 1}^n$, which has at most $m$ ones at any time. Via the following method, we can support \textsc{Rank/Select} queries and single-element updates.

Let $w = \Theta(\log n)$ be the word size. We set $r = 2^{(\log n / \log \log n)^{1/5}}$, and divide array $A$ into $n/r$ subarrays of $r$ elements. For simplicity, we assume $r$ is an integer and $r \mid n$. We call each of these subarrays a \emph{block}. 
For a \textsc{Rank} query, i.e., the sum of a prefix of $A$, the prefix can be divided into two parts: the first $i$ complete blocks, and a prefix of the $(i+1)$-th block. 
A \textsc{Select} query can also be solved by querying among the list of blocks and querying in a single block.
We handle these two parts separately.

Denote by $\phi_i$ the number of ones in the $i$-th block, where $\phi_i\in \Phi=\midBK{0, 1, \ldots, r}$. The first task is to maintain the partial sum of $\phi_i$ over the $n/r$ blocks.
A simple solution is to construct a range tree over the sequence $(\phi_1, \ldots, \phi_{n/r})$. It occupies $O(nw/r)$ bits of space and can support updates in $O(\log n/r) = O(\log n)$ time. However, there is a more efficient implementation given by \cite{patrascu2014dynamic}. Their data structure can maintain an array of $n$ bits, supporting updates or \textsc{Rank/Select} queries in $O(\log n / \log \log n)$ time. Its structure is a $(\poly\log n)$-ary tree, where each subtree maintains its corresponding subarray. During an update or a query, the algorithm walks from the root to some leaf, spending $O(1)$ time on every level.

To adapt their data structure to our demand, we cut off subtrees smaller than $r$ while keeping the upper levels of the tree. For simplicity, we again assume $r$ is a power of the branching factor. There are $O(n/r)$ nodes in the remaining part of the tree, each occupying $O(w)$ space. As a result, we can maintain the partial sum of $(\phi_1, \ldots, \phi_{n/r})$ using $O(nw/r)$ space and $O(\log n/\log w)$ query/update time.

To support \textsc{Rank} queries within some block $i$, we then maintain each block using a daB-tree. The leaves contain $r$ bits of the subarray, while the label $\phi$ is the sum of elements in the subtree (which means the root label is exactly $\phi_i$). By applying \Cref{clm:dyn_aBtree}, every daB-tree is able to solve \textsc{Rank} with $O(\log^5 r)$ time per operation, three bits redundancy, and $O(r^5\log r)$-sized lookup tables.

The remaining task is to concatenate $n/r$ VMs (each storing a daB-tree).
Since each block is sufficiently large, we use the first approach mentioned in \cref{sec:tech}, which is similar to the approach proposed in~\cite{BCDMS99}. The following lemma, if we allow the time to be amortized, is a special case of \cite[Lemma 1]{RR03}.
We prove the non-amortized variant here for completeness.

\begin{lemma}
  \label{lm:chunking}
  Assume there are $B$ VMs of $\l_1, \ldots, \l_B$ bits respectively, where $\l_1, \ldots, \l_B \le L$ and $\sum_{i=1}^B \l_i \le S$ always hold, with parameters $S \le BL$. We can store all these $B$ VMs and their lengths within $S + O(B \sqrt{L w} + Bw)$ memory bits under the Word RAM model with word-size $w = \Omega(\log S)$. Each word-access or allocation/release in any VM takes $O(1)$ time to complete.
\end{lemma}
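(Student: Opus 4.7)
The plan is to use a chunking approach in the spirit of~\cite{BCDMS99}. Choose a chunk size of $c = \Theta\bigl(\max(w,\sqrt{Lw})\bigr)$ bits, rounded up to a multiple of $w$ so that chunks are word-aligned. The physical memory is partitioned into chunks of $c$ bits, and each VM stores its bits contiguously across an ordered list of chunks assigned to it.

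For each VM $i$, I would maintain (i) its current bit-length $\ell_i$ stored explicitly in $O(w)$ bits, and (ii) a pointer table $T_i$ of fixed capacity $\lceil L/c \rceil$ whose $j$-th entry holds the physical address of the $j$-th chunk in VM $i$'s chunk list. The bits of VM $i$ are laid out in the order of the chunk list, so that the word at bit position $p$ lives at offset $p \bmod c$ inside the chunk $T_i[\lfloor p/c \rfloor]$. Since the total number of chunks is $N = O(S/c + B)$, each pointer fits in $O(\log N) = O(w)$ bits (using the hypothesis $w = \Omega(\log S)$), so each $T_i$ occupies $O((L/c)\cdot w) = O(\sqrt{Lw})$ bits, and the collection of pointer tables sits in a static region of total size $O(B\sqrt{Lw})$.

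To obtain worst-case $O(1)$ allocation and release, I would maintain a free list of unused chunks as a singly linked stack, threading the ``next'' pointer through the free chunks themselves (each chunk holds at least one word), with the stack head kept in a small metadata region of $O(w)$ bits. An allocation on VM $i$ increments $\ell_i$; if the current last chunk can no longer accommodate the new bit, pop a chunk from the free stack and append its address to $T_i$. A release decrements $\ell_i$; if the previous last chunk becomes entirely unused, push it back onto the stack. A word-access simply indexes $T_i$ once and reads a single word inside the identified chunk. Each operation uses $O(1)$ arithmetic and $O(1)$ memory probes.

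For the space accounting, the chunks used by VM $i$ occupy at most $\lceil \ell_i/c \rceil \cdot c \le \ell_i + c$ bits, summing to $S + O(Bc) = S + O(B\sqrt{Lw} + Bw)$. Adding the $O(B\sqrt{Lw})$ bits for the pointer tables, the $O(Bw)$ bits holding the lengths $\ell_i$, and $O(w)$ bits for the free-list head gives the target bound $S + O(B\sqrt{Lw} + Bw)$. The main subtlety I would flag is achieving worst-case (rather than merely amortized) $O(1)$ allocation and release: this is exactly what the stack-based free list buys, avoiding any rebuilding step and thereby distinguishing the construction from the amortized variant in~\cite{RR03}.
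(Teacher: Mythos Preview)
Your proposal is correct and follows essentially the same chunking approach as the paper: chunk size $\Theta(\sqrt{Lw})$, a fixed-capacity per-VM pointer table of $\lceil L/c\rceil$ entries, and a free-chunk pool to support $O(1)$ worst-case allocation and release. The only cosmetic differences are that the paper keeps an explicit free-slot list occupying $O(N_{\text{chk}}\cdot w)$ additional bits rather than threading the stack through the free chunks themselves, and does not explicitly round the chunk size up to $\max(w,\sqrt{Lw})$; neither of these affects the final bound or the argument.
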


\begin{proof}
  Let $s$ be a parameter which we will determine later. We divide the memory bits into pieces of size $s$ and call each of them a \emph{chunk}. Then the $i$-th VM is divided into $\ceil{\l_i / s}$ chunks (if $\l_i$ is not a multiple of $s$, we round it up, wasting only $O(Bs)$ bits in total). The total number of chunks from all $B$ VMs is no more than $S / s + B \eqdef \Nchunk$.
  
  We first directly store the lengths of VMs, $\l_1, \ldots, \l_B$, in the physical RAM, which takes $O(B w)$ memory bits.
  
  Let the following $\Nchunk \cdot s$ bits form $\Nchunk$ slots of $s$ bits, where each slot can either store a complete chunk or remain empty. We simply store each chunk in an arbitrary slot, and maintain a pointer of $O(w)$ bits to that chunk. For each VM $i$, we leave the space for $\ceil{L/s}$ pointers for it, even if the actual number of chunks is less than this number. All these pointers occupy $O(w \cdot BL/s)$ memory bits. Once we want to access the $j$-th word in a VM, we first read the $\floor{jw / s}$-th pointer to see where the desired chunk is stored, then go to that slot and access the word.
  
  To support quick allocations, we additionally maintain a list of free slots (i.e., slots without any chunk stored inside). This list occupies $O(\Nchunk \cdot w)$ bits and allows us to pick a free slot for the new chunk. When the rise of $\l_i$ causes a new chunk to appear, we first access the free-slot list, obtaining an arbitrary free slot. We then store the bits in the new chunk into that slot and set up the pointer to it. This procedure can be completed within constant time. When we want to delete a chunk and release its slot, we just delete the pointer and insert that slot into the free-slot list.
  
  In summary, we can store all VMs within
  \begin{align*}
    &\Nchunk \cdot s + O\bk{\frac{BLw}{s}} + O(Bw) + O(\Nchunk \cdot w) \\
    \le{} & S + B\cdot s + O\bk{\frac{BLw}{s}} + O\bk{\frac{Sw}{s}} + O(Bw)\\
    \le{} & S + B\cdot s + O\bk{\frac{BLw}{s}} + O(Bw)
  \end{align*}
  memory bits (the last inequality holds as $S \le BL$). Let $s = \sqrt{Lw}$, the memory usage becomes $S + O(B \sqrt{Lw} + Bw)$, as desired.
\end{proof}

\begin{remark}
  In the allocate-and-free memory model, where we can allocate or free memory blocks of any specified sizes, all $B$ VMs and their lengths can be stored in $\sum_{i=1}^B \ell_i+O(B\sqrt{Lw}+Bw)$ bits. Hence, the final data structure dynamically resizes as the array $A$ gets updated.
\end{remark}

When using the above lemma to store daB-trees of subarrays, the total length of the daB-trees is at most
\[
  \sum_{i=1}^{n/r} \bk{\log \treenum[r, \phi_i] + 3} \le \log \treenum[n, \phi] + 3n/r \eqdef S.
\]
Also, the space usage of a single VM will not exceed $\log \treenum[r, \phi_i] + 3 \le r + 3 \eqdef L$. Substituting these parameters into the lemma above, we know the total space usage for the daB-trees is
\begin{align*}
  S + O\bigbk{B \sqrt{Lw} + Bw} &= \log \treenum[n, \phi] + 3n/r + O\bigbk{n \sqrt{w / r}} = \log \treenum[n, \phi] + O\bigbk{n \sqrt{w / r}} \\
                     &= \log \binom{n}{m} + O\bk{\frac{n \sqrt{w}}{2^{\frac{1}{2}(\log n / \log \log n)^{1/5}}}} \\
                     &= \log \binom{n}{m} + O\bk{\frac{n}{2^{(\log n)^{\frac{1}{5} - o(1)}}}}.
                       \numberthis \label{eq:space_usage_rank}
\end{align*}
Combined with the $O(nw/r) = o(n \sqrt{w/r})$ space usage from the inter-block data structure, the total space usage of our design is still \eqref{eq:space_usage_rank}.

During each update or \textsc{Rank/Select} query, the inter-block data structure takes $O(\log n / \log \log n)$ time, while the daB-tree takes at most $O(\log^5 r) = O(\log n / \log \log n)$ time as well. Our running time is already optimal even if we do not have any constraint on the space usage, shown in the lower bound part of \cite{patrascu2014dynamic}.

\subsection{Dynamic Arithmetic Coding}

Assume we have an array $A$ of length $n$, where each element $A_i$ is in alphabet $\Sigma = \midBK{1, 2, \ldots, |\Sigma|}$ of size $|\Sigma| = O(1)$. For an element $\sigma \in \Sigma$, the number of occurrences of $\sigma$ in $A$ is denoted by $f_\sigma$. We aim to store the array $A$ conditioned on $\midBK{f_\sigma}$, using
\[
  \log \binom{n}{f_1, f_2, \ldots, f_{|\Sigma|}} + R
\]
bits of space, where $R$ is the redundancy. $\midBK{f_\sigma}$ is stored outside; we do not count its space usage. Moreover, we need to support queries and updates to any element $A_i$ efficiently.

To solve this problem, we again let $r$ be a parameter that divides $n$, and split $A$ into $n/r$ blocks. We build a daB-tree for each block, in which every node $u$ is augmented with a label $\phi$ recording the numbers of occurrences of all symbols $\sigma \in \Sigma$. The size of the labels' alphabet is bounded by $|\Phi| \le (r + 1)^{|\Sigma|} = r^{O(1)}$.

Let $f^{[i]}_{\sigma}$ denote the number of occurrences of $\sigma$ within the $i$-th block. Then, the daB-tree for the $i$-th block occupies $\log\, \Bigbinom{r}{f^{[i]}_1, \, \ldots \, , \, f^{[i]}_{|\Sigma|}} + 3$ bits of space.
The total space of daB-trees over all $i$ is bounded by $\log \binom{n}{f_1, \, \ldots\, ,\, f_{|\Sigma|}} + 3n/r$. Besides the daB-trees, the root labels of all blocks are stored directly, which takes another $O(nw / r)$ bits. Finally, we use \Cref{lm:chunking} to combine the (variable-length) VMs that store daB-trees, incurring an additional redundancy of $O\bigbk{n\sqrt{w/r} + nw/r} = O\bigbk{n \sqrt{w/r}}$ bits (we assume $r \ge \log n$, otherwise it is not succinct). The total space usage is therefore
\[
  \log \binom{n}{f_1, f_2, \ldots, f_{|\Sigma|}} + O\bk{\frac{n \sqrt{w}}{\sqrt{r}}}.
\]
The running time for queries and updates are $O(\log^2 r)$ and $O(\log^5 r)$ according to \cref{clm:dyn_aBtree}.

The discussion above gives a time-space trade-off of our algorithm. By choosing $r = \poly \log n$, we achieve $O(n / \poly \log n)$ redundancy for arbitrary $\poly \log n$, while the query and update times are $O(\log^2 \log n)$ and $O(\log^5 \log n)$.

\subsection{Dynamic Sequences}

Another important application is to maintain a sequence of $n$ symbols in $\Sigma = \midBK{1, 2, \ldots, |\Sigma|}$, allowing \textsc{Rank/Select} queries, insertions, and deletions:
\begin{itemize}
\item \textsc{Rank}$(k,\sigma)$: query the number of occurrences of $\sigma$ in the first $k$ entries.
\item \textsc{Select}$(k,\sigma)$: query the location of the $k$-th $\sigma$.
\item \textsc{Insert}($i, \sigma$): insert a new symbol $\sigma$ between original elements $A_{i-1}$ and $A_{i}$.
\item \textsc{Delete}($i$): remove the symbol $A_i$ from the array.
\end{itemize}
Notice that insertions and deletions are more powerful than updates. They also change the length of the array.
Since the dynamic \textsc{Rank/Select} problem has a cell-probe lower bound of $\Omega(\log n / \log \log n)$ query or update time, it is natural to ask what is the smallest redundancy we can achieve under this optimal time complexity. Prior to this paper, the best known approach is \cite{NN14} with $O(n / \log^{1-\varepsilon} n)$ bits of redundancy. In this subsection, we improve it to $O(n\cdot \poly \log \log n / \log^{2} n)$ bits.

Let $s = \log^2 n / \poly \log \log n$ and $r = \poly \log n$ be parameters. We divide the whole sequence $A$ into blocks, with each block containing $\Theta(r)$ symbols. For each block, we construct a daB-tree to maintain the corresponding subarray. However, there is a difference from the previous subsections: Instead of placing a single symbol in each leaf node, we store a subarray of $\Theta(s)$ symbols within every leaf node. In other words, the instance maintained by the daB-tree is a sequence of subarrays of symbols -- it includes not only the information of symbols in $A$, but also how we partition the sequence into leaves. We will show later that the latter part is small.

Unlike in \cref{sec:daB}, a leaf node here cannot be compressed into $O(1)$ words. When any leaf is updated, we redo the encoding process for it and rewrite all its $O(s)$ memory bits. This initiates $O(s/w)$ word-accesses or allocations/releases on the daB-tree, taking a running time of $O(s h^5 / w)$, where $h \defeq \log(r/s) = O(\log \log n)$ is the height of the daB-tree.

The number of symbols stored in a specific leaf can vary up to a constant factor; similar for the internal nodes of the daB-tree. When some node is too unbalanced, we reconstruct its subtree to make it balanced again. Formally, for some internal node $u$, when the number of symbols in its left subtree is at least $1 + 1/h$ times more than that in its right subtree, we immediately reconstruct the subtree rooted at $u$ to make every descendant of $u$ perfectly balanced. On the one hand, the maximum number of stored symbols in a leaf is only $(1+1/h)^h = O(1)$ times larger than the minimum number, so every leaf always stores $\Theta(s)$ symbols. On the other hand, assume the subtree rooted at $u$ has size $n_u$. Starting from a balanced state, there will be at least $n_u / h$ operations before reconstructing $u$'s subtree. The cost of reconstruction is equal to updating $n_u$ leaves. Furthermore, taking into account that the daB-tree has $h$ levels, every insertion/deletion causes $O(h^2)$ leaf updates in amortization due to reconstruction at internal nodes. This results in a running time of $O(sh^7 / w)$.

It is also possible that, after numerous of operations, some block no longer contains $\Theta(r)$ symbols. In this case, we either split a block into two smaller ones, and merge two smaller ones together, or move some symbols to the neighbor block, just like what a normal B-tree does. This ensures that every block contains $\Theta(r)$ symbols at any time.

The encoding method within a leaf is as follows. To encode a subarray of $\Theta(s)$ symbols conditioned on its label on the daB-tree, we use a static aB-tree from \cite{patrascu2008succincter}. The static aB-tree maintains exactly the same labels as our daB-trees. By choosing proper parameters and not rounding up the spill into memory bits, we can make each static aB-tree only incurring $O(s/r) = o(1)$ bits of redundancy. Moreover, every leaf of the static aB-tree stores a fixed number of $\Theta(w)$ symbols, which can be compressed into a single word. For each query, the static aB-tree requests $O(\log (s/w))$ word-accesses to the VM of the leaves of the daB-tree, which takes $O(h \log (s/w)) = O(h \log \log n)$ time to complete.

The remainder is similar to the previous subsections: Additional tree structures are used to answer inter-block \textsc{Rank/Select} queries; \cref{lm:chunking} is used to combine multiple VMs. The label on daB-trees includes the numbers of occurrences of all symbols $\sigma \in \Sigma$ which is suitable for \textsc{Rank/Select} queries.

As discussed above, $\log \treenum[n, \phi]$ for the whole array is larger than the actual entropy of storing the symbols, because an additional partition of $n$ symbols into $\Theta(n/s)$ leaves is stored, which has entropy $\log \binom{n + \Theta(n/s)}{\Theta(n/s)} = \Theta(n \log s / s) = \Theta(n \log \log n / s)$. This is the dominant term in the total redundancy of our data structure, as the other parts only incur little redundancy. The insertion/deletion time is $O(sh^7 / w) + O(\log n / \log \log n)$, where the latter term comes from the optimal inter-block \textsc{Rank/Select} data structure. Let $s = \log^2 n / \log^{8} \log n$, we achieve the optimal insertion/deletion time $O(\log n / \log \log n)$ that matches the cell-probe lower bound of the dynamic \textsc{Rank/Select} problem. The queries have the same running time. The corresponding space redundancy is $O(n \log^9 \log n / \log^2 n)$ bits.

\newcommand{\boxedleftparen}{\tikz[baseline=(char.base)]{\node[shape=rectangle,draw,inner sep=1pt] (char) {$\mkern-2.0mu\texttt{(}$};}}
\newcommand{\boxedrightparen}{\tikz[baseline=(char.base)]{\node[shape=rectangle,draw,inner sep=1pt] (char) {$\texttt{)}\mkern-2.0mu$};}}
\newcommand{\Lpar}{\boxedleftparen}
\newcommand{\Rpar}{\boxedrightparen}

\paragraph{Dynamic Succinct Trees.} \cite{NS14} introduced an algorithm to maintain dynamic rooted trees with little redundancy. Their algorithm is based on maintaining a parentheses sequence $P_1, \ldots, P_{2n}$ that represents the tree, allowing insertions, deletions, and a variety of queries:
\begin{itemize}
\item Finding the position of the parenthesis matching $P_i$.
\item \textsc{Rank/Select} queries on opening or closing parentheses.
\item Finding the position of min/max \emph{excess value} in range $[i, j]$.\\
  $\vdots$
\end{itemize}
Here, the \emph{excess value} of a position $i$ is defined as
\[
  E_i \defeq \sum_{j=1}^i g(P_j), \quad \textup{where} \quad g(P_j) \defeq
  \begin{cases}
    +1, & \textup{if } P_j = \Lpar, \\
    -1, & \textup{if } P_j = \Rpar.
  \end{cases}
\]

Based on the data structure introduced above, for each daB-tree node $u$ which corresponds to a subarray $(P_i, \ldots, P_j)$, we additionally record the following quantities within the label:
\begin{itemize}
\item $E_{j} - E_{i-1} = \sum_{k=i}^j g(P_k)$.
\item $\min_{k \in [i,j]} (E_k - E_{i-1})$, $\max_{k \in [i,j]} (E_k - E_{i-1})$, and the position $k$ that maximizes/minimizes them.
\item The numbers of occurrences of $\Lpar \Rpar$ and $\Rpar \Lpar$, respectively. Also record the first and last parentheses, i.e., $P_i$ and $P_j$, in order to update this information.\footnote{This information is not for the queries listed above, but for several other tree operations in \cite[Table I]{NS14}. E.g., to query the number of leaves within a subarray, we only need to know the number of occurrences of $\Lpar \Rpar$.}
\end{itemize}
Then, our data structure is able to support all types of operations in \cite[Table I]{NS14} with only $O(n \log^9 \log n / \log^2 n)$ bits of redundancy. During any operation, the intra-block time consumption (i.e., on the daB-trees) is bounded by $O(\log n / \log \log n)$. The running time of the inter-block data structure depends on the implementation: If we use a simple range tree, then all operations have $O(\log n)$ time; if we use the $O\bigbk{\sqrt{\log n}}$-ary B-tree presented in \cite[Section 7]{NS14}, many of the operations can be improved to $O(\log n / \log \log n)$ time. In the latter case, our running time of all types of operations are equal to the Variant 1 in \cite[Table I]{NS14}, and with improved redundancy of $O(n \log^9 \log n / \log^2 n)$ bits.

\section{Lower Bounds for Two-Way Adapters}
\label{sec:lb}
\newcommand{\Cst}{X_{\textup{st}}}
\newcommand{\Xst}{X_{\textup{st}}}
\newcommand{\Ast}{A_{\textup{st}}}
\newcommand{\Bst}{B_{\textup{st}}}
\newcommand{\Cend}{X_{\textup{end}}}
\newcommand{\Xend}{X_{\textup{end}}}
\newcommand{\sample}{\textsf{sample}}
\newcommand{\costu}{\cost_u}

\newcommand{\pst}{p_{\textup{st}}}
\newcommand{\qst}{q_{\textup{st}}}
\newcommand{\pend}{p_{\textup{end}}}
\newcommand{\qend}{q_{\textup{end}}}

In this section, we present a nearly matching lower bound for two-way adapters introduced in \cref{sec:two_way_adapters}. Recall that the cost of the adapters constructed in \cref{lem:two_way_adapter} is $O(\log L)$. We will show a lower bound of $\Omega(\log L / \log \log L)$, showing our construction is nearly optimal, up to a logarithmic factor.

Consider the following balls-to-bins problem: We want to dynamically maintain a bijection from two sets of balls $A \sqcup B = \BK{a_1, a_2, \ldots, a_p} \sqcup \BK{b_1, b_2, \ldots, b_q}$ to a set of bins $\BK{1, 2, \ldots, p+q}$, which indicates how we put each ball into an individual bin. We require the bijection to only depend on $p, q$. An insertion operation increases $p$ or $q$ by one, which creates a new ball and a new bin simultaneously, and might require us to relocate some balls according to the new bijection of $(p + 1, q)$ or $(p, q + 1)$. The cost of an insertion is defined as the number of balls we relocate. Clearly, the above problem is a special case of the 2-way adapter problem (where only allocation is allowed). Below, we consider sequences of $2n$ consecutive insertions consisting of $n$ insertions of $A$-balls and $B$-balls respectively, which transforms the empty starting state $p = q = 0$ to the final state $p = q = n$. We call such a sequence an \emph{instance}.

\begin{theorem}
  \label{thm:lower_bound}
  There is a distribution of instances such that any algorithm for this balls-to-bins problem has an expected amortized cost of $\Omega\bk{\log n / \log \log n}$ on this distribution.
\end{theorem}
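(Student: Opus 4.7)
The approach I would take is Yao's minimax principle: exhibit a distribution over instances such that any deterministic bijection grid $\{\sigma_{p,q}\}$ has expected total cost $\Omega(n\log n/\log\log n)$, giving the amortized bound. For the hard distribution I would take the uniform distribution over the $\binom{2n}{n}$ instances — equivalently, a uniformly random monotone path from $(0,0)$ to $(n,n)$ in the grid. A deterministic algorithm is then just a precomputed choice of bijection at every grid point $(p,q)$, and the cost on instance $s$ is $\sum_{e\in P_s}\cost(e)$, where $P_s$ is the path and $\cost(e)$ counts the ball movements across the edge $e$.

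The first concrete step is a counting lemma bounding the reachability of bijections under a cost budget. Given $\sigma_{p,q}$ on $m=p+q$ balls, the number of choices for $\sigma_{p+1,q}$ with cost at most $c$ is at most $\binom{m}{c-1}\cdot c!$: one chooses the $c-1$ old balls to relocate, and decides how these and the new ball occupy the $c$ freed slots. Taking logs, each low-cost transition moves the bijection within a neighborhood of $\exp(O(c\log m))$ other bijections. This local-rigidity fact is the only combinatorial input I would need about the bijections themselves.

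Next I would set up a chronogram-style decomposition adapted to the instance (rather than the algorithm's memory, which here is trivial since $\sigma_{p,q}$ is a pure function of $(p,q)$). Partition the $2n$ operations into $k = \Theta(\log n/\log\log n)$ epochs of geometrically decreasing length $|E_i| = \Theta(n/B^i)$ with $B = \Theta(\log n)$. Conditioned on earlier epochs, the interleaving inside epoch $E_i$ contributes $\Omega(|E_i|)$ bits of entropy. I would then charge each epoch $E_i$ a cost of $\Omega(|E_i|)$ ball movements: because different interleavings inside $E_i$ lead to different sub-paths in the grid and hence to different bijections at the boundary with $E_{i-1}$, and the counting lemma above implies that distinguishing $2^{\Omega(|E_i|)}$ such boundary bijections requires the total cost along $E_i$ to be $\Omega(|E_i|/\log m)$, amplified by a factor of $k$ through the nested epoch structure. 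Summing $|E_i|\cdot k$ over all epochs yields total expected cost $\Omega(nk) = \Omega(n\log n/\log\log n)$, which is the desired amortized bound.

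The main obstacle, as I see it, will be carrying out this information-transfer argument in a regime where the bijections are fixed globally and carry no instance-specific state. The right conceptual move is that the ``memory'' against which we run the chronogram is not algorithmic memory but the subset of the bijection grid visited by the random path; the total cost measures how much the grid must ``spread out'' trajectories of different random instances. Making this spreading quantitative — in particular, showing that the counting lemma's $O(c\log m)$ bits per transition interact with the uniform distribution over paths to force an $\Omega(\log m/\log\log m)$ lower bound per epoch rather than merely $\Omega(1/\log m)$ — is where I expect the technical work to concentrate, and it is also the reason I expect the lower bound to lose a $\log\log L$ factor compared to the upper bound of \cref{lem:two_way_adapter}.
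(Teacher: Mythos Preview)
Your proposal has a genuine gap. The central claim---that ``different interleavings inside $E_i$ lead to different sub-paths in the grid and hence to different bijections at the boundary with $E_{i-1}$''---does not hold in any useful sense: the bijection at a grid point $(p,q)$ depends only on $(p,q)$, so two interleavings that reach the same endpoint share the boundary bijection regardless of the path taken. Under the uniform distribution the entropy of the endpoint of an epoch $E_i$ is only $\Theta(\log|E_i|)$, not $\Theta(|E_i|)$; even if your counting lemma could be turned into an encoding argument, the per-epoch cost bound it would yield is $\Theta(\log|E_i|/\log n)$, which sums to $O(1)$ amortized. You correctly name the obstacle (``bijections are fixed globally and carry no instance-specific state''), but the proposed fix---treating the set of visited bijections as the ``memory''---is circular: that set is determined by the path, i.e.\ by the input itself, so there is no compression to exploit and no query phase to charge against. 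The chronogram template simply does not attach here.

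The paper's proof takes a different route that avoids information transfer entirely. It does not use the uniform distribution; instead it builds a structured hard distribution recursively over a $\lambda$-ary tree (with $\lambda=\log^2 n$): at each internal node a random bit decides whether the first child segment consists entirely of $A$-insertions and the last entirely of $B$-insertions, or vice versa, while the middle $\lambda-2$ children recurse. The key lemma is combinatorial, not information-theoretic: if two operation sequences $\I_1,\I_2$ at a node $u$ share start and end states but place $d$ of the newly inserted balls in different segments, then $\cost_u(\I_1)+\cost_u(\I_2)\ge d/2$. Its proof traces, for each newly inserted ball, a ``chain'' of displacements within its segment and shows that unless the chain is broken by a movement in some other segment (which contributes to $\cost_u$), the segment of insertion is determined solely by the start and end bijections---hence identical on $\I_1$ and $\I_2$. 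The random bit then guarantees $d=\Omega(m)$ with probability $1/2$ at each activated node, and summing over the $\Theta(\log n/\log\log n)$ levels gives the bound. Your counting lemma is correct, but nothing resembling it is used; the argument is about reconfiguration cost between instance pairs with common endpoints, not about reachability within the bijection graph.
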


The proof of this theorem involves the same framework that appears in \cite{li2023tight}. By Yao's Minimax Principle, we may assume without loss of generality that the algorithm is deterministic.

\begin{proof}
  Fix a parameter $\lambda$ which will be determined later. To start the proof, we build a $\lambda$-ary tree over any instance (sequence of $2n$ insertions): there is a root node representing all $2n$ operations; for every node $u$ representing $2m$ ($m \ge \lambda$) consecutive operations, we divide these operations into $\lambda$ consecutive subsequences, each of $2m/\lambda$ operations and is represented by a child of $u$ (hence $u$ has $\lambda$ children). Nodes representing less than $2\lambda$ operations, which are of depth $h = \floor{\log_{\lambda} n}$, become leaf nodes.\footnote{We can assume $n = \lambda^{h-1} \cdot k$ for an integer $0 < k < r$ without loss of generality, by slightly decreasing the value of $n$ by at most a constant factor, which is negligible in the desired lower bound. Then, each leaf contains exactly $2k$ operations.}

  Based on the tree structure, we can distribute the total cost of the algorithm to each node. Assume ball $x$ is moved in operations $t_1$ and $t_2$ ($t_1 < t_2$) but not between them, consider the two leaves $v_1$ and $v_2$ containing $t_1$ and $t_2$ respectively, then we assign the unit cost of relocating $x$ in the $t_2$-th operation to the \emph{lowest common ancestor} (LCA) of leaves $v_1$ and $v_2$. The cost of any node $u$, written $\cost_u$, is defined as the total number of relocations assigned to it. Clearly, the cost of the algorithm is at least the sum of $\cost_u$ over all nodes $u$ in the tree.

  Next, we will first give lower bounds to $\costu$ based on combinatorial quantities, before we construct the hard distribution accordingly.

  \paragraph*{A lower bound for $\normalfont \cost_u$.} Now we focus on an internal node $u$, which corresponds to an interval of the operation sequence. The $\lambda$ children of $u$ further divide this interval into $\lambda$ sub-intervals, each we call a \emph{segment}. By the previous definition, a ball movement is counted into $\cost_u$ only when it and the previous movement of this ball occur in different segments. From this point of view, it is natural to view all the operations within a segment as a whole when analyzing $\cost_u$.

  We arbitrarily fix the operations before node $u$ and consider two possible subsequences of operations during node $u$, namely $\I_1$ and $\I_2$, with the following properties:
  \begin{itemize}
  \item \emph{Sharing endpoints}: both subsequences transforms the starting state $(\pst, \qst)$ to $(\pend, \qend)$.
  \item \emph{Balanced}: the number of balls inserted to $A$ and $B$ are the same, i.e., $\pend - \pst = \qend - \qst$ which equals half the number of operations.
  \end{itemize}
  Denote the cost of $u$ on instance $\I_i$ by $\cost_u(\I_i)$. The following claim shows a bound of $\cost_u(\I_i)$ for each instance pair $(\I_1, \I_2)$ with the above properties.

  \begin{claim}
    \label{clm:lower_bound_by_dist}
    Let $\dist(\I_1, \I_2)$ be the number of balls among $\{a_{\pst + 1}, a_{\pst + 2}, \ldots, a_{\pend}, b_{\qst + 1}, b_{\qst + 2}, \ldots, b_{\qend}\}$ that are inserted in different segments on $\I_1$ and $\I_2$, then
    \[\cost_u(\I_1) + \cost_u(\I_2) \ge \frac{1}{2}\dist(\I_1, \I_2).\] 
  \end{claim}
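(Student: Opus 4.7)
The plan is to prove $\cost_u(\I_1) + \cost_u(\I_2) \ge \dist(\I_1,\I_2)/2$ by a charging argument: I associate each differentially inserted ball with a segment-transition movement of some ball on one of the two instances, in such a way that each such movement is charged by at most two balls.

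First I set up notation. For each ball $x \in \{a_{\pst+1}, \ldots, a_{\pend}, b_{\qst+1}, \ldots, b_{\qend}\}$, let $c_i(x) \in [\lambda]$ denote the segment in which $x$ is inserted on $\I_i$, and partition the differentially inserted balls into $D_< \defeq \{x : c_1(x) < c_2(x)\}$ and $D_> \defeq \{x : c_1(x) > c_2(x)\}$, with $|D_<| + |D_>| = \dist(\I_1,\I_2)$. The crucial global observation is that $\I_1$ and $\I_2$ end at the same state $(\pend,\qend)$, so they induce the same final bijection $\sigma_{\pend,\qend}$ and every ball ends in the same bin on both instances; any discrepancy at intermediate segment boundaries must therefore be repaired before the end of $u$.

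The charging goes as follows. For each $x \in D_<$, consider the segment boundary $s \defeq c_2(x) - 1$: at this boundary $x$ is present on $\I_1$ but absent on $\I_2$, while $N_s$, the total number of balls, is the same on both instances (since the number of operations in the first $s$ segments is the same). Let $b^1_x$ be $x$'s bin on $\I_1$ at boundary $s$. If $b^1_x \ne \sigma_{\pend,\qend}(x)$, then $x$ must be relocated on $\I_1$ in some segment strictly greater than $s \ge c_1(x)$, yielding a segment transition in $x$'s $\I_1$-trace, and I charge $x$ to this cost unit. Otherwise $b^1_x = \sigma_{\pend,\qend}(x) \le N_s$, so on $\I_2$ the same bin is occupied at boundary $s$ by some other ball $y(x)$; since $x$ eventually takes over this bin on $\I_2$, $y(x)$ must vacate it in a later segment, producing a transition in $y(x)$'s $\I_2$-trace to which I charge $x$. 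A symmetric construction handles $D_>$ with the roles of $\I_1$ and $\I_2$ reversed.

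The hard part will be bounding the charging multiplicity. Two distinct balls $x, x' \in D_<$ might point to the same witness $y(x) = y(x')$, but in that case $y$ occupies two different bins at the respective boundaries $s(x), s(x')$, which forces at least two distinct segment transitions in $y$'s $\I_2$-trace, so each cost unit is charged at most twice. A subtler issue is that $y(x)$ could be an ``old'' ball (present before $u$) whose first in-$u$ movement does not lie across two in-$u$ segments and therefore does not contribute to $\cost_u$; this is resolved by rerouting such charges to the new ball that ultimately fills $y$'s vacated bin, which is itself created inside $u$ and hence does have a valid in-$u$ segment transition. Combining the $D_<$ and $D_>$ contributions with a total multiplicity of at most two per charged cost unit yields $\cost_u(\I_1) + \cost_u(\I_2) \ge \dist(\I_1,\I_2)/2$.
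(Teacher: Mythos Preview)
Your charging scheme is a natural alternative to the paper's cycle/chain argument, but it has a real gap in Case~B. Take $x\in D_<$ with $b^1_x=\sigma_{\pend,\qend}(x)$, and suppose the witness $y(x)$ occupying bin $b^1_x$ on $\I_2$ at boundary $s$ is an old ball that has not yet moved inside $u$. Then $y(x)$'s vacation of $b^1_x$ is its \emph{first} movement within $u$, and by the definition of $\cost_u$ (which assigns a relocation to the LCA of it and the \emph{previous} movement of the same ball) this relocation is assigned to a strict ancestor of $u$, not to $u$. So there is no $\cost_u(\I_2)$ unit to receive the charge from $x$.

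Your proposed fix---reroute to ``the new ball that ultimately fills $y$'s vacated bin''---does not work as stated. That bin is $\sigma_{\pend,\qend}(x)$, so the ball that ultimately fills it is $x$ itself. But on $\I_2$, $x$ can be inserted directly into bin $\sigma_{\pend,\qend}(x)$ in segment $s{+}1$ and never move again, in which case $x$ has \emph{no} segment transition on $\I_2$ whatsoever. This is perfectly consistent: $y(x)$ may vacate the bin in segment $s{+}1$ just before $x$'s insertion, all within a single segment. Thus neither $y(x)$ nor the rerouted target $x$ supplies a $\cost_u$ unit. Iterating the reroute (follow who fills the bin $y(x)$ moves \emph{into}, etc.) is not obviously finite or injective, and in fact leads you straight to the ``induced chain'' $x,x_1,x_2,\ldots$ that the paper analyzes; you then need the paper's counting---a ball moved in $t$ segments lies in at most $t\le 2(t{-}1)$ chains---to control multiplicity. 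In other words, the old-ball case is exactly where the paper's chain machinery does the real work, and your sketch does not replace it.

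For the record, your multiplicity bound for the non-rerouted charges is essentially sound: Case~A of $D_<$ charges $\I_1$-transitions of $x$ itself, Case~B of $D_<$ charges $\I_2$-transitions of $y(x)$, distinct $x$'s sharing the same $y$ force vacations from distinct bins and hence distinct transitions, and combining with the symmetric $D_>$ charges gives at most two charges per cost unit. But this only covers the situations where the charged transition is genuinely a $\cost_u$ unit; the old-ball case is not a side issue you can patch locally.
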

  \begin{proof}
    To start the proof, we first consider the special case where $\cost_u(\I_1) = \cost_u(\I_2) = 0$. Basically, we want to show that for each newly inserted ball $x$, the segment $v$ in which $x$ is inserted can be determined by solely $(\pst, \qst)$ and $(\pend, \qend)$, hence it is the same on $\I_1$ and $\I_2$, and $x$ is not counted in $\dist(\I_1, \I_2)$.

    Suppose $x$ is inserted in segment $v$ on $\I_i$. Viewing all the operations within $v$ as a whole, without of loss generality, we can regard them as first putting all the newly inserted balls into all the newly inserted bins arbitrarily, and then applying a permutation over all the balls to obtain the bijection after $v$. Recall that each permutation can be decomposed into several disjoint cycles, i.e., there are a series of balls $x, x_1, x_2, \ldots, x_k$, such that the ball $x$ is moved to the original bin of $x_1$, the ball $x_1$ is moved to the original bin of $x_2$, etc., and finally, the ball $x_k$ is moved to the original bin of $x$ (a newly inserted bin). By the assumption that $\cost_u(\I_i) = 0$, all the balls $x, x_1, \ldots, x_k$ cannot be moved in other segments, and in particular, each ball in this cycle is either a ball in $\Xst \defeq \midBK{a_1, \ldots, a_{\pst}, b_1, \ldots, b_{\qst}}$, or a newly inserted ball in segment $v$. Let $x_s$ be the first newly inserted ball after $x$ in this cycle.
    
    Next, we show that the chain $x, x_1, \ldots, x_{s-1}$ can be determined by solely $(\pst, \qst)$ and $(\pend, \qend)$. As these balls are only moved during the segment $v$, by examining the bin containing $x$ in the ending state $(\pend, \qend)$, and the ball in this bin at the beginning $(\pst, \qst)$, we can determine $x_1$. We can further determine $x_2, \ldots, x_{s-1}$ similarly, until $x_{s-1}$, for which we found that it is in a newly inserted bin of segment $v$ in the ending state. Then the next element $x_s$ will be a newly inserted ball. Hence, both the chain and $v$ can be determined by the beginning and ending states, which are same in $\I_1$ and $\I_2$, implying $\dist(\I_1, \I_2) = 0$.

    Now we extend this argument to the general case where the chains are not completely determined by the starting and ending states, $(\pst, \qst)$ and $(\pend, \qend)$. Suppose ball $x$ was inserted in segment $v$ on $\I_i$, we observe the chain $x, x_1, x_2, \ldots, x_k$ that describes a cyclic movement in the segment $v$ on $\I_i$, which we call $x$'s \emph{induced chain}. If none of $x, x_1, \ldots, x_k$ is moved in another segment (on $\I_i$), this chain could be determined by $(\pst, \qst)$ and $(\pend, \qend)$ in the same way as when $\costu(\I_i) = 0$, as well as the segment $v$ when $x$ was inserted; if this condition further holds on both sequences $\I_1$ and $\I_2$ for the same ball $x$, then the chain and $v$ must be the same on both sequences.
    Otherwise, there is a ball $y$ in the chain $x, x_1, \ldots, x_k$ that is moved in another segment on $\I_i$, in which case we say this additional movement \emph{destroys} the chain, and it is no longer guaranteed to be the same on both sequences. When a ball $y$ is moved in $t$ different segments on $\I_i$, it belongs to (and thus can destroy) $t \le 2(t-1)$ chains while contributing $t-1$ to $\costu(\I_i)$. Thus, $\costu(\I_1) + \costu(\I_2)$ is at least half the number of destroyed chains on both sequences.

    Moreover, if a ball $x$ is inserted in different segments on $\I_1$ and $\I_2$ (thus is counted in $\dist(\I_1, \I_2)$), $x$'s induced chain on at least one of $\I_1$ and $\I_2$ must be destroyed. This implies $\costu(\I_1) + \costu(\I_2) \ge \dist(\I_1, \I_2) / 2$.
  \end{proof}
  
  \paragraph*{Construction of the hard distribution.} Motivated by \cref{clm:lower_bound_by_dist}, we construct \cref{alg:sample_of_instance}, which enlarges $\E[\dist(\I_1, \I_2)]$ for most of the nodes $u$.

  \begin{algorithm}[htbp]
    \captionof{Distribution}{Hard Distribution}
    \label{alg:sample_of_instance}
    \DontPrintSemicolon
    
    \SetKwFunction{sample}{Sample}
    \SetKwProg{Fn}{Function}{:}{}
    
    \Fn(\Comment{Sample the operation sequence of $u$ ($m$ insertions to both $A$ and $B$)\hspace{-2em}}){\sample{$u$}} {
      \uIf(\Comment{$u$ is a leaf node}){$m < \lambda$} {
        Arbitrarily insert $m$ balls to $A$ and $B$ respectively\;
      } \Else {
        $r \gets$ uniformly random from $\BK{0,1}$\;
        \For{each child $v$ of $u$} {
          \uIf{$v$ is the leftmost child} {
            In $v$, arbitrarily insert $2rm/\lambda$ balls to $A$ and $(2 - 2r)m/\lambda$ balls to $B$\;
          } \uElseIf{$v$ is the rightmost child} {
            In $v$, arbitrarily insert $(2 - 2r)m/\lambda$ balls to $A$ and $2rm/\lambda$ balls to $B$\;
          } \Else{
            Run \sample{$v$} to get the operation sequence in $v$\;
          }
        }
      }
    }
    Run \sample{$r$} on the root node $r$
  \end{algorithm}

  \bigskip
  
  Based on \cref{alg:sample_of_instance}, we can bound the expectation of $\cost_u$ for each node $u$ where $\sample(u)$ is called (say $u$ is \emph{activated}), shown in the following claim.

  \begin{claim}
    For each activated node $u$ with $2m$ operations in it, $\E\Bk{\cost_u} \ge m/8$.
  \end{claim}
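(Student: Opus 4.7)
The plan is to lower-bound $\E[\cost_u]$ at an internal activated node via a coupling argument that invokes \cref{clm:lower_bound_by_dist}. I would fix all random bits outside the invocation $\sample(u)$, and also fix all random bits used inside the recursive calls on the middle children of $u$. The only remaining randomness is the coin $r \in \{0,1\}$ flipped at $u$ itself; let $\I_0, \I_1$ be the two concrete sequences obtained by $r = 0$ and $r = 1$. By construction $\I_0$ and $\I_1$ agree on everything outside $u$ and agree on every middle child of $u$, so in particular they share the same start state $(\pst, \qst)$ and the same end state $(\pend, \qend) = (\pst + m, \qst + m)$ at $u$, and both are balanced inside $u$. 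Hence \cref{clm:lower_bound_by_dist} applies and yields
\[
  \cost_u(\I_0) + \cost_u(\I_1) \ge \tfrac{1}{2}\, \dist(\I_0, \I_1).
\]

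A short inductive check on the recursion first verifies that $\sample$ always returns a balanced sequence, so every middle child of $u$ performs exactly $m/\lambda$ A-insertions and $m/\lambda$ B-insertions no matter how its internal coins came out. The key combinatorial step is then to compute $\dist(\I_0, \I_1)$ using this balance. Listing the cumulative B-counts after each of the $\lambda$ consecutive segments of $u$, I obtain
\[
  \I_0: \; 2m/\lambda,\; 3m/\lambda,\; 4m/\lambda,\; \dots,\; m,\; m, \qquad
  \I_1: \; 0,\; m/\lambda,\; 2m/\lambda,\; \dots,\; m - 2m/\lambda,\; m.
\]
Consequently the segment of $u$ containing the $j$-th newly inserted B-ball is determined by which consecutive range $j$ lies in, and the ranges in $\I_1$ are a strict one-segment shift of those in $\I_0$, so no newly inserted B-ball lies in the same segment on both instances. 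By the symmetric computation for A-balls, $\dist(\I_0, \I_1) = 2m$ deterministically for every fixing of the outer randomness.

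Combining the two ingredients gives $\cost_u(\I_0) + \cost_u(\I_1) \ge m$ for every outcome of the outer randomness. Averaging over the uniform bit $r$ and then over all other random bits yields $\E[\cost_u] \ge m/2$, which is a fortiori the desired $m/8$. The main subtlety I expect is the bookkeeping around the coupling: one must use the \emph{same} random bits inside every middle child of $u$ in $\I_0$ and $\I_1$ so that the middle segments have identical B-insertion counts, which is precisely what makes the two cumulative-count sequences above align as a clean one-segment shift. No allocation choice made by the (deterministic) algorithm can affect $\dist(\I_0, \I_1)$, since $\dist$ is a function of the operation sequence alone and not of the bijection the algorithm maintains.
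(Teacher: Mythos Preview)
Your argument is correct and follows essentially the same route as the paper: both proofs invoke \cref{clm:lower_bound_by_dist} on a pair of operation sequences at $u$ that differ in the bit $r$, and both rely on the fact that every middle child of $u$ is balanced so that the cumulative $A$- and $B$-counts after each segment are determined by $r$ alone. The only packaging difference is that the paper draws two \emph{independent} samples of $\sample(u)$ (paying a factor $1/2$ for the event that their $r$-bits differ, and then only counting the $\lambda-2$ middle segments to get $\dist \ge m$), whereas you couple all randomness except $r$ and account for the endpoint segments as well, obtaining $\dist(\I_0,\I_1)=2m$ deterministically and hence the sharper bound $\E[\cost_u]\ge m/2$. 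One cosmetic point: your phrase ``strict one-segment shift'' is slightly off---for most balls the shift between $\I_0$ and $\I_1$ is two segments, not one---but since the shift is always at least one, your conclusion that no newly inserted ball lies in the same segment on both instances still stands.
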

  \begin{proof}
    We sample two operation sequences in node $u$ according to \cref{alg:sample_of_instance} independently, denoted by $\I_1, \I_2$. We have
    \[
      \E[\costu] = \frac{1}{2} \E\bigBk{\costu(\I_1) + \costu(\I_2)} \ge \frac{1}{4} \E[\dist(\I_1, \I_2)]
    \]
    according to \cref{clm:lower_bound_by_dist}. Then, we consider the random bit $r$ in the run of $\sample(u)$. With $1/2$ probability, $\I_1, \I_2$ have $r = 0, 1$ respectively, or vice versa. It only remains to show that $\dist(\I_1, \I_2) \ge m$ in this case. (Due to symmetry, we only consider when $\I_1$ has $r = 0$ and $\I_2$ has $r = 1$.)

    To compute $\dist(\I_1, \I_2)$, we compare the newly inserted balls to $A$ on $\I_1$ and $\I_2$. We label all the newly inserted balls of $A$ by integers in $[m]$ in the order of insertion. By \cref{alg:sample_of_instance}, for each $1<i<\lambda$, the set of $A$-balls inserted in the $i$-th segment on $\I_1$ is $\big( (i-2)m/\lambda, \, (i-1)m/\lambda \big]$, while that on $\I_2$ is $\big( im/\lambda, \, (i+1)m/\lambda \big]$. Hence, there are $m/\lambda$ balls in $A$ which are inserted in the $i$-th segment on $\I_1$ but in other segment on $\I_2$. Add all the segments together, the balls in $A$ contribute $(\lambda - 2) \cdot m/\lambda \ge m/2$ to $\dist(\I_1, \I_2)$. It is similar for balls in $B$. Therefore, $\dist(\I_1, \I_2) \ge m$.
  \end{proof}
  
  Finally, we can sum up the cost of each node together to bound the total expected cost of the hard distribution.

    Set $\lambda = \log^2 n$, then the height of the tree is $h = \ceil{\log_\lambda n} = \bigceil{\frac{\log n}{2\log \log n}}$. Note that each activated node has $1 - 2/\lambda$ fraction of children being activated. Hence, in the $\l$-th level of the tree from the root (the root is in level 0), there are $\bk{1 - \frac{2}{\lambda}}^\l \ge 1 - \frac{2h}{\lambda} \ge 1 - \frac{2}{\log n \log \log n} \ge \frac{1}{2}$ fraction of nodes to be activated. Taking summation over all the activated nodes, the expected amortized cost over this hard distribution is at least 
    \begin{align*}
        \frac{1}{n} \sum_{\text{node }u} \E \Bk{\cost_u}
        &\ge \frac{1}{n} \sum_{\l = 1}^{h-1} \frac{1}{4} \cdot n_\l \cdot \frac{2n}{n_\l}
          \ge \Omega\bk{\frac{\log n}{\log \log n}}
    \end{align*}
    where $n_\l = \lambda^{\l}$ represents the number of level-$\l$ nodes on the tree.
\end{proof}

\bibliographystyle{alpha}
\bibliography{reference.bib}

\end{document}